\documentclass[10pt,letter,journal]{IEEEtran}
\usepackage{subcaption}
\usepackage{cite}
\usepackage{amsmath,amsthm,amssymb,amsfonts}
\usepackage{lipsum}
\usepackage{bbm}

\usepackage{mathtools}
\usepackage{bm}
\usepackage{algorithmic}
\usepackage{graphicx}
\usepackage{textcomp}
\usepackage{tikz}
\usepackage{xcolor}
\usepackage{svg}
\usepackage{flushend}
\usepackage{balance}
\usepackage[spaces,hyphens]{url}
\renewcommand{\baselinestretch}{0.97}
\theoremstyle{plain}
\newtheorem{lemma}{Lemma}
\newtheorem{corollary}{Corollary}
\newtheorem{theorem}{Theorem}
\theoremstyle{definition}

\newtheorem{assum}{Assumption}
\newtheorem{remark}{Remark}
\newtheorem{proposition}{Proposition}
\newcommand{\remove}[1]{}

\title{Optimal Threshold Type Policies for Partially Observable Restless Bandits}
\author{Anu Krishna, Rahul Meshram and Kesav Ram Kaza
\thanks{Anu Krishna (e-mail: vasanthakuma@alum.iisc.ac.in ) is alumnus of Department  of Electronics Systems  Engineering, IISc, Bangalore, R. Meshram is with the Department of Electrical Engineering, Indian Institute of Technology Madras, Chennai, India. (e-mail: {rahulmeshram}@ee.iitm.ac.in), and K. R. Kaza is with the Indian Institute of Information Technology, Allahabad, India (e-mail: krkaza@iiita.ac.in).  Work of R. Meshram is supported from IITM NFIG Grant, IITM NFSG Grant, and ANRF grant Project No EEQ/2021/000812.}}

\begin{document}


\maketitle

\begin{abstract}
We study a finite-state partially observable restless multi-armed bandit (PO-RMAB) motivated by resource-constrained wildlife monitoring. Underlying condition of each location evolves independently, while only a limited number of locations can be actively monitored at each decision epoch. Activation reveals the current state, whereas passive operation provides no observation, yielding a collapsing-bandit belief dynamics. Our main contribution is a structural characterization of optimal policies for the  multidimensional belief-state problem. We establish sufficient conditions under which an activation advantage  is monotone with respect to the belief state, and hence the optimal policy has the threshold type structure over the $(M-1)$-dimensional belief simplex. The key result is obtained by bounding the variation of the value function through a model-dependent Lipschitz constant. 
We further specialize the result to one-step birth--death dynamics and derive explicit bounds. 
\end{abstract}

\section{Introduction}




Restless multi-armed bandit (RMAB) is a class of sequential decision problem where $N$ arms are independent and each arm is modeled as finite state Markov process. The state of each arm evolves at each time step and the evolution of each arm is depends on action performed on an arm. Each action corresponds to play or not play of an arm. Each action on an arm yields reward which depends on state of that arm. Arms are coupled with budget constraint at the decision maker. The goal of decision maker is to determine the play of arms in each time step to maximize the long term expected discounted cumulative reward.  In classical RMAB \cite{Whittle88},  the popular approach is Whittle index policy, where budget constrained are relaxed and Lagrangian relaxation is introduced with introduction of subsidy for not playing an arm. The indexability is key criteria for the index policy, in this as subsidy increases there is monotonicity of passive set.   
A key step toward establishing indexability is to characterize the optimal policy for each subsidy as a threshold policy in the belief state. Indexability additionally requires the passive set to expand monotonically with the subsidy.  

In this paper, we consider a finite-state partially observable RMAB (PO-RMAB), where the state of an arm is unobserved when the arm is passive but is perfectly revealed when it is activated. The state evolves according to the action selected by the decision maker. The decision maker therefore maintains a belief distribution over the finite state space, which is updated according to the action and observation. For the two-state case, this observation structure is commonly referred to as the \textit{collapsing bandit} model \cite{Mate2020Bandit}.
PO-RMAB models arise in several resource-constrained monitoring and
decision-making applications. Most existing structural results focus on
two-state models, while multi-state models are considerably more
challenging because the belief state lies in a multidimensional simplex.
We address this challenge by deriving model-dependent Lipschitz
conditions for an increasing threshold structure.


\begin{figure}[t]
    \centering
    \includegraphics[width=0.85\linewidth]{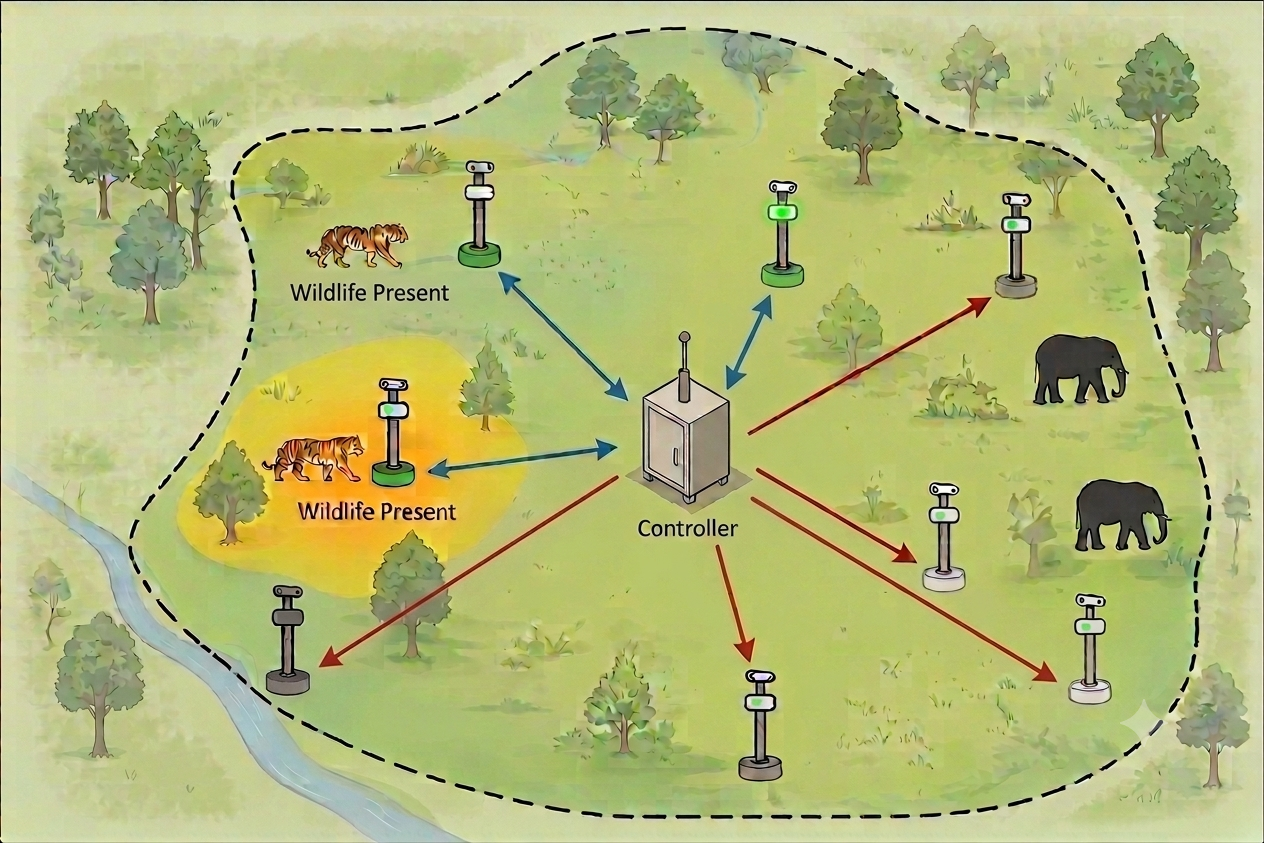}
    \caption{ A schematic diagram of a partially observable, resource-constrained wildlife monitoring system. 
    }
    \label{fig:SchematicDiagram}
\end{figure}


 In wildlife monitoring,  monitoring rely on sensor networks that is deployed across large, remote areas. These sensors are used to detect the presence of endangered species and to mitigate human-wildlife conflicts such as crop damage and train collisions \cite{qian2016restless}. 
  Since sensing devices are typically constrained by battery capacity, communication bandwidth, and data-processing resources, it is generally infeasible to activate all sensing locations simultaneously. A central controller must therefore repeatedly allocate a limited number of sensing or intervention actions among multiple locations. This creates a resource-constrained sequential decision problem in which the condition of each location evolves over time, while its current state may not be directly observable unless the location is activated.
 This setting naturally leads to a partially observable RMAB problem. Each location is associated with an underlying stochastic process describing its wildlife activity or threat condition. When a location is not activated, its state continues to evolve, but the controller does not obtain a direct observation of the current state. The controller must therefore maintain a belief distribution over the possible states and use this information to allocate the available resources. Importantly, activation can provide information about the current condition of a location while also affecting its subsequent state evolution. Thus, the controller must jointly account for uncertainty, information acquisition, resource allocation, and future system dynamics. We formulate this problem as PO-RMAB with finite states. The objective is to allocate these limited monitoring resources across
locations so as to maximize the expected discounted cumulative
monitoring reward. An schematic of the monitoring system and
the corresponding decision-making process is illustrated in Fig.~\ref{fig:SchematicDiagram}.

\vspace{-5pt}
\subsection{Related Work}
We present related work on RMAB, PO-RMAB, and its applications. 

\textbf{Restless multi-armed bandit (RMAB):} The RMAB framework was introduced by Whittle \cite{Whittle88}, who also
proposed the Whittle index policy based on a Lagrangian relaxation of the
resource constraint. The broader literature on Markovian multi-armed
bandits is reviewed in \cite{Gittins11}. The Whittle index policy is
particularly attractive because of its asymptotic optimality under
appropriate conditions \cite{Weber1990index}. A key requirement for
applying the Whittle index is indexability, which is often established by
showing that the optimal single-arm policy has a threshold structure in
the state and that the passive set expands monotonically with the subsidy.
Simulation-based approaches for computing and evaluating Whittle indices
are studied in \cite{Meshram2020}. More generally, RMABs can be viewed as
a class of weakly coupled Markov decision processes, for which Lagrangian
relaxation provides a useful decomposition and heuristic policies
\cite{adelman2008relaxations,hawkins2003lagrangian}.

\textbf{Partially observable restless multi-armed bandit (PO-RMAB):} PO-RMABs have been extensively studied for two state Markov model under various model assumptions. The belief about state is scalar and it is updated using Bayes' rule. Indexability and threshold policies have been studied in \cite{LiuZhao10,Mate2020Bandit,Meshram18}. 
PO-RMAB models arise in applications including wildlife protection \cite{qian2016restless}, intrusion detection \cite{liu2012dynamic}, opportunistic scheduling in multi-channel communications \cite{LiuZhao10,Meshram15}, public healthcare interventions \cite{Mate2020Bandit,nino2026adherence}, machine repair \cite{Akbarzadeh2019Restless}, and wildfire monitoring and planning \cite{kaza2024constrained}. Most existing results for these applications consider two-state models. Multi-state PO-RMABs are considerably more challenging because the belief state lies in a multidimensional simplex, making the characterization of threshold policies nontrivial. Existing multi-state results have focused on specific models such as restart bandits \cite{Akbarzadeh2022indexability,Meshram2021indexability}, relaxed indexability \cite{Liu2025relaxed}, and multi-action formulations \cite{Meshram2025:MultiAction}.  However, threshold structures over the multidimensional belief space are not characterized. In this work, we study threshold policies for finite-state PO-RMABs with state-revealing activation. Since the belief state lies in a multidimensional simplex, we use Lipschitz continuity of the value function to derive model-dependent sufficient conditions for an increasing threshold structure. We further specialize these conditions to one-step birth--death dynamics.

\subsection{Our Contributions}

The main contributions of this paper are as follows.
\begin{enumerate}
    \item \textbf{Finite-state threshold structure:}
For finite-state partially observable collapsing restless bandits, we establish sufficient conditions for an increasing threshold structure over the $(M-1)$-dimensional belief simplex under monotone activation-reward differences, increasing rewards, and appropriate stochastic ordering of transition matrices.

\item \textbf{Lipschitz-based sufficient conditions:}
Using Lipschitz continuity of the value function, we derive a model-dependent sufficient condition that relates the minimum adjacent activation-reward difference to the reward span, the Dobrushin contraction coefficient, and the diameter of the post-activation belief set.


\item \textbf{Explicit conditions for one-step birth-death dynamics:}
For one-step birth-death transition matrices, we explicitly characterize the Dobrushin coefficient and post-activation belief-set diameter for $M=2$, $M=3$, and $M\geq4$, yielding explicit threshold conditions in terms of local transition probabilities and reward span. probabilities and reward span.
\end{enumerate}

The paper is organized as follows. Section~\ref{sec:model} describes the mathematical model of PO-RMAB and problem formulation. In Section~\ref{sec:Decoupled-RMAB}, we discuss decoupled PO-RMAB. We analyze a single-armed restless bandit in Section~\ref{sec:SARB}, we study Lipschitz property of value and threshold type policy in belief state. Finally, in Section~\ref{sec:conclusion}, we provide concluding remarks. 
\vspace{-5pt}
\section{Mathematical Model}
\label{sec:model}

We consider a resource-constrained wildlife monitoring system with $N$ independent 
monitored locations. Each location is represented as an arm $n$, $1 \leq n \leq N$, and is modeled as a
partially observable Markov decision process (POMDP) $\mathcal{M}_n =
\left\{
\mathcal{S}_n,\mathcal{A}_n,\mathcal{P}_n,
\mathcal{R}_n,\mathcal{O}_n,\mathcal{Z}_n,\beta
\right\}.$


\remove{
Consider a partially observable restless multi-armed bandit problem with
$N$ independent arms. Each arm $n$, $1 \leq n \leq N$, is modeled as a
partially observable Markov decision process (POMDP) $\mathcal{M}_n =
\left\{
\mathcal{S}_n,\mathcal{A}_n,\mathcal{P}_n,
\mathcal{R}_n,\mathcal{O}_n,\mathcal{Z}_n,\beta
\right\}.$}

In this problem, each arm has $M$ possible states and two actions, with state $0$ representing the lowest level of wildlife activity (or threat) and state $M-1$ representing the highest level. The state and action
spaces are given by $\mathcal{S}_n = \{0,1,\ldots,M-1\},
\text{and }
\mathcal{A}_n = \{0,1\}$, respectively, where action $1$ denotes active (scheduled) operation and
$0$ denotes passive (unscheduled) operation. For arm $n$, let $P_n^a = \left[p_{n,ij}^a\right]_{i,j=0}^{M-1},
\quad a\in\{0,1\},$
denote the transition probability matrix under action $a$, where $p_{n,ij}^a
=
\Pr\left(s_n(t+1)=j \mid s_n(t)=i,a_n(t)=a\right).$
Thus, for every $i$ and $a$, $\sum_{j=0}^{M-1}p_{n,ij}^a=1.$

At time $t$, arm $n$ occupies state
$s_n(t)\in\mathcal{S}_n$, the controller selects an action $a_n(t)\in\mathcal{A}_n$ and the resulting reward is
$r_n(s_n(t),a_n(t))$. The discount factor is denoted by $\beta$, and 
$0<\beta<1$.

The state of an arm is observed when the arm is active, whereas no
observation is available when the arm is passive. That is, when
$a_n(t)=1$, the observation space is
$\mathcal{O}_n^1=\{0,1,\ldots,M-1\}.$ 
Under the assumption of perfect observation upon activation,
$o_n(t)=s_n(t)$. When $a_n(t)=0$, the observation space is
$\mathcal{O}_n^0=\{\Phi\}.$ The observation kernel $Z_n(o|i,1)=\mathbbm{1}\{o=i\} $ and $Z_n(\Phi|i,0)=1$. Then the joint observation space is given by $\mathcal{O}(a)=\prod_{n=1}^{N}\mathcal{O}_n^{a_n}$. The history accumulated through time $t$ is $H(t)=\{\mathbf{a}(1),\mathbf{o}(1),\dots,\mathbf{a}(t-1),\mathbf{o}(t-1)\},$
using $\mathbf{a}(t)=\{a_1(t),\dots,a_N(t)\}$ actions of arms and $\mathbf{o}(t)=\{o_1(t),\dots,o_N(t)\}$ observation of arms.

Since the state is not directly observable under passive operation, the
planner maintains a belief vector over the state space. The belief of arm
$n$ at time $t$ is
\[
\boldsymbol{\omega}_n(t)
=
\left[
\omega_n^0(t),\omega_n^1(t),\ldots,
\omega_n^{M-1}(t)
\right]^{\mathsf T},
\]
where $\omega_n^i(t)
=
\Pr\left(
s_n(t)=i\mid H(t),\boldsymbol{\omega}_n(0)
\right).$
The belief belongs to the probability simplex
\begin{equation*}
\Pi
=
\left\{
\boldsymbol{\omega}\in\mathbb{R}^M:
\omega^i\geq 0,\;
\sum_{i=0}^{M-1}\omega^i=1
\right\}.
\end{equation*}
Hence, the belief space is an $(M-1)$-dimensional simplex.

\subsection{Belief Update}
For arm $n$, the belief at time $t$ is defined as $\omega_n^s(t)=\Pr\left(s_n(t)=s\mid H(t),\boldsymbol{\omega}_n(0)\right)$, where $s\in\mathcal{S}_n$. The belief vector is given by
$\boldsymbol{\omega}_n(t).$

Suppose that the active action $a_n(t)=1$ is selected and that the current state is observed to be $s_n(t)=k$, where $k\in\{0,1,\ldots,M-1\}$. Since $p_{n,kj}^{1}=\Pr\left(s_n(t+1)=j\mid s_n(t)=k,a_n(t)=1\right)$, the belief of the next state, conditioned on observing state $k$, is given by $\boldsymbol{\omega}_n(t+1)=\boldsymbol{\Gamma}_{n,k},$
where 
\[
\boldsymbol{\Gamma}_{n,k}
=
\begin{bmatrix}
p_{n,k0}^{1} \\
p_{n,k1}^{1} \\
\vdots \\
p_{n,k,M-1}^{1}
\end{bmatrix}
=(\mathbf{P}_n^1)^T \mathbf{e}_k,\quad k \in \{0,1,\ldots,M-1\},
\]
and $\mathbf{e}_k$ denotes the $k$-th standard basis vector.
\remove{
For the three-state model, where transitions are allowed only between neighboring states, we have $p_{n,02}^{1}=p_{n,20}^{1}=0$. Therefore,
\[
\boldsymbol{\Gamma}_{n,0}
=
\begin{bmatrix}
p_{n,00}^{1} \\
p_{n,01}^{1} \\
0
\end{bmatrix}.
\]
Similarly,
\[
\boldsymbol{\Gamma}_{n,1}
=
\begin{bmatrix}
p_{n,10}^{1} \\
p_{n,11}^{1} \\
p_{n,12}^{1}
\end{bmatrix}
=
\begin{bmatrix}
p_{n,10}^{1} \\
p_{n,11}^{1} \\
1-p_{n,10}^{1}-p_{n,11}^{1}
\end{bmatrix}.
\]
Finally,
\[
\boldsymbol{\Gamma}_{n,2}
=
\begin{bmatrix}
0 \\
p_{n,21}^{1} \\
p_{n,22}^{1}
\end{bmatrix}.
\]
}
When the passive action $a_n(t)=0$ is selected, no state observation is obtained. The belief therefore evolves according to the passive transition matrix $\boldsymbol{\omega}_n(t+1)=(\mathbf{P}_n^0)^T \boldsymbol{\omega}_n(t).$
Equivalently, for each $j\in\{0,1,\ldots,M-1\}$, $\omega_n^j(t+1)=\sum_{i=0}^{M-1} \omega_n^i(t) p_{n,ij}^{0}.$
\subsection{Problem Formulation}

A policy $\phi$ maps the history of observations and actions to the
joint action. That is,
$\phi: H(t)\to \mathcal{A}.$
Let $\mathbf{a}(t)
=
(a_1(t),\ldots,a_N(t))$
denote the joint action at time $t$. The admissible action set is
\begin{equation}
\mathcal{A}
=
\left\{
\mathbf{a}\in\{0,1\}^N:
\sum_{n=1}^{N}a_n\leq B
\right\},
\label{eq:joint_action_set}
\end{equation}
where $B$ is the maximum number of arms that can be activated in each
time slot.

For a joint belief state $\boldsymbol{\Omega}
=
\left(
\boldsymbol{\omega}_1,\ldots,\boldsymbol{\omega}_N
\right),$
the expected instantaneous reward from arm $n$ under action $a_n$ is $R_n(\boldsymbol{\omega}_n,a_n)
=
\sum_{i=0}^{M-1}
\omega_n^i r_n(i,a_n).$

The objective is to maximize the expected discounted total reward
\begin{equation*}
V(\boldsymbol{\Omega})
=
\max_{\phi}
\mathbb{E}_{\phi}
\left[
\sum_{t=0}^{\infty}
\beta^t
\sum_{n=1}^{N}
R_n
\left(
\boldsymbol{\omega}_n(t),a_n(t)
\right)|\boldsymbol{\Omega}(0)=\boldsymbol{\Omega}
\right].
\label{eq:joint_value_function}
\end{equation*}

The corresponding Bellman equation is
\begin{equation*}
\begin{aligned}
V(\boldsymbol{\Omega})
=
\max_{\mathbf{a}\in\mathcal{A}}
\Bigg\{
&
\sum_{n=1}^{N}
R_n(\boldsymbol{\omega}_n,a_n)
\\
&+
\beta
\sum_{\mathbf{o}\in\mathcal{O}(a)}
\Pr(\mathbf{o}\mid\boldsymbol{\Omega},\mathbf{a})
V\left(
\tau(\boldsymbol{\Omega},\mathbf{o},\mathbf{a})
\right)
\Bigg\}.
\end{aligned}
\label{eq:joint_bellman}
\end{equation*} Since the arms evolve independently and observations are generated independently conditional on the corresponding arm states and actions, the joint observation distribution factorizes as $\Pr(\mathbf{o}\mid\boldsymbol{\Omega},\mathbf{a})
=
\prod_{n=1}^{N}
\Pr
\left(
o_n\mid\boldsymbol{\omega}_n,a_n
\right).$ The resulting belief-state POMDP has a high-dimensional state space,
which makes direct computation of the optimal value function
computationally intractable for large $N$.

\section{Decoupling of the Restless Multi-Armed Bandit}
\label{sec:Decoupled-RMAB}

We relax the instantaneous activation constraint in Eqn.~\eqref{eq:joint_action_set} to an expected discounted activation constraint:
\begin{equation}
\begin{aligned}
\max_{\phi}\quad
&
\mathbb{E}_{\phi}
\left[
\sum_{t=0}^{\infty}
\beta^t
\sum_{n=1}^{N}
R_n(\boldsymbol{\omega}_n(t),a_n(t))
\right]
\\
\text{s.t.}\quad
&
\mathbb{E}_{\phi}
\left[
\sum_{t=0}^{\infty}
\beta^t
\sum_{n=1}^{N}a_n(t)
\right]
\leq
\frac{B}{1-\beta}.
\end{aligned}
\label{eq:relaxed_problem}
\end{equation}

Introducing a Lagrange multiplier $\lambda\geq0$ for the relaxed
constraint gives
\begin{equation}
\begin{aligned}
V^\lambda(\boldsymbol{\Omega})
=
\max_{\phi}
\mathbb{E}_{\phi}
\Bigg[
\sum_{t=0}^{\infty}\beta^t
\Bigg(
&
\sum_{n=1}^{N}
R_n(\boldsymbol{\omega}_n(t),a_n(t))
\\
&
+
\lambda
\left(
B-\sum_{n=1}^{N}a_n(t)
\right)
\Bigg)
\Bigg].
\end{aligned}
\label{eq:lagrangian_problem}
\end{equation}

We write the Bellman equation as
\begin{equation}
\begin{aligned}
V^\lambda(\boldsymbol{\Omega})
&=
\max_{\mathbf{a}\in\{0,1\}^N}
\Bigg\{
\sum_{n=1}^{N}
R_n(\boldsymbol{\omega}_n,a_n)
+
\lambda
\left(
B-\sum_{n=1}^{N}a_n
\right)
\\
&
+
\beta
\sum_{\mathbf{o}\in\mathcal{O}}
V^\lambda
\left(
\tau(\boldsymbol{\Omega},\mathbf{o},\mathbf{a})
\right)
\prod_{n=1}^{N}
\Pr
\left(
o_n\mid\boldsymbol{\omega}_n,a_n
\right)
\Bigg\}.
\end{aligned}
\label{eq:lagrangian_bellman}
\end{equation}

\begin{lemma}
\label{lemma:value_decomposition}
The optimal value function of the Lagrangian-relaxed problem can be
decomposed as
$V^\lambda(\boldsymbol{\Omega})
=
\sum_{n=1}^{N}
V_n^\lambda(\boldsymbol{\omega}_n)
+
\frac{B\lambda}{1-\beta},$
where $V_n^\lambda(\boldsymbol{\omega}_n)$ is the optimal value
function of the single-armed problem
\begin{equation}
\begin{aligned}
V_n^\lambda(\boldsymbol{\omega}_n)
&= 
\max_{a_n\in\{0,1\}}
\Bigg\{
R_n(\boldsymbol{\omega}_n,a_n)
+
\lambda(1-a_n)
\\
& +
\beta
\sum_{o_n\in\mathcal{O}_n}
V_n^\lambda
\left(
\tau_n(\boldsymbol{\omega}_n,o_n,a_n)
\right)
\times
\Pr
\left(
o_n\mid\boldsymbol{\omega}_n,a_n
\right)
\Bigg\}.
\end{aligned}
\label{eq:single_arm_bellman}
\end{equation}
\end{lemma}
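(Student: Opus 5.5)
The plan is a guess-and-verify argument on the Bellman equation~\eqref{eq:lagrangian_bellman}. We define the candidate $W(\boldsymbol{\Omega}) := \sum_{n=1}^{N} V_n^\lambda(\boldsymbol{\omega}_n) + \frac{B\lambda}{1-\beta}$, where each $V_n^\lambda$ is the unique bounded fixed point of the single-arm operator in~\eqref{eq:single_arm_bellman}. Existence and uniqueness of $V_n^\lambda$ come from the Banach fixed point theorem: the operator is a $\beta$-contraction in sup norm on bounded functions over $\Pi$, since rewards are bounded (finite $M$) and $\lambda$ is fixed. The joint Bellman operator $T$ on the right of~\eqref{eq:lagrangian_bellman} is likewise a $\beta$-contraction on bounded functions of $\boldsymbol{\Omega}$, and $V^\lambda$ is its unique fixed point. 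It therefore suffices to show $TW = W$.

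To compute $TW$, we use the factorization of the joint observation law and of the belief update. Conditional on $\mathbf{a}$, arm $n$'s next belief $\tau_n(\boldsymbol{\omega}_n,o_n,a_n)$ depends only on $(\boldsymbol{\omega}_n,o_n,a_n)$, so $\tau(\boldsymbol{\Omega},\mathbf{o},\mathbf{a}) = (\tau_1(\boldsymbol{\omega}_1,o_1,a_1),\ldots,\tau_N(\boldsymbol{\omega}_N,o_N,a_N))$. Also $\Pr(\mathbf{o}\mid\boldsymbol{\Omega},\mathbf{a}) = \prod_n \Pr(o_n\mid\boldsymbol{\omega}_n,a_n)$. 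Substituting $W$ into the expectation, we sum term by term. For the $n$-th summand, we marginalize out $o_m$ for $m\neq n$ using $\sum_{o_m}\Pr(o_m\mid\boldsymbol{\omega}_m,a_m)=1$. The constant term contributes $\beta\frac{B\lambda}{1-\beta}$. Hence
\[
\beta\sum_{\mathbf{o}} W(\tau(\boldsymbol{\Omega},\mathbf{o},\mathbf{a}))\prod_n \Pr(o_n\mid\boldsymbol{\omega}_n,a_n) = \sum_{n=1}^N \beta\sum_{o_n}\Pr(o_n\mid\boldsymbol{\omega}_n,a_n)V_n^\lambda(\tau_n(\boldsymbol{\omega}_n,o_n,a_n)) + \frac{\beta B\lambda}{1-\beta}.
\]
Next we rewrite the immediate term as $\lambda B + \sum_n\big(R_n(\boldsymbol{\omega}_n,a_n) - \lambda a_n\big) = \sum_n\big(R_n(\boldsymbol{\omega}_n,a_n) + \lambda(1-a_n)\big) + \lambda B - N\lambda$.

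This is the step to handle carefully. The per-arm subsidy form $\lambda(1-a_n)$ in~\eqref{eq:single_arm_bellman} sums to $N\lambda - \lambda\sum_n a_n$, whereas the Lagrangian contains $\lambda B - \lambda\sum_n a_n$. The constant therefore comes out as $\frac{(B-N)\lambda}{1-\beta}$ rather than $\frac{B\lambda}{1-\beta}$. I would resolve this by tracking the constant explicitly: either read the single-arm reward as $R_n - \lambda a_n$, which gives exactly $\frac{B\lambda}{1-\beta}$, or keep the subsidy form and absorb the shift $-\frac{N\lambda}{1-\beta}$ into the additive constant. Either choice leaves the maximizing actions unchanged, and I would note in the proof which convention is used.

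With the immediate and continuation terms both split arm by arm, the objective inside the max over $\mathbf{a}\in\{0,1\}^N$ is a sum of terms each depending only on $a_n$, plus a constant. Since the relaxed action set is the full product $\{0,1\}^N$, the maximum of this separable sum equals the sum of the per-arm maxima. Each per-arm maximum is exactly the right side of~\eqref{eq:single_arm_bellman}, which equals $V_n^\lambda(\boldsymbol{\omega}_n)$. The constant satisfies $\lambda B + \frac{\beta B\lambda}{1-\beta} = \frac{B\lambda}{1-\beta}$, with the convention adjustment above. This gives $TW=W$, and uniqueness of the fixed point yields $W=V^\lambda$.
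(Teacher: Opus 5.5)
Your proposal is correct and follows the same route as the argument the paper defers to (Adelman--Mersereau, Proposition~1): posit the separable form, use the product structure of $\Pr(\mathbf{o}\mid\boldsymbol{\Omega},\mathbf{a})$ and of the belief update together with the unconstrained action set $\{0,1\}^N$ to split the maximization arm by arm, and conclude by uniqueness of the bounded fixed point of the relaxed Bellman operator. The constant mismatch you flagged is genuine: with the subsidy form $\lambda(1-a_n)$ in the single-arm problem, the correct additive term is $\frac{(B-N)\lambda}{1-\beta}$, whereas $\frac{B\lambda}{1-\beta}$ holds only under the Adelman--Mersereau convention $R_n(\boldsymbol{\omega}_n,a_n)-\lambda a_n$. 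For example, with $N=B=1$ and $\lambda>0$ the stated identity would give $V^\lambda=V_1^\lambda+\frac{\lambda}{1-\beta}$ even though the two problems coincide, so fixing the convention explicitly, as you propose, is the right repair.
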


The proof follows along the lines of
\cite[Proposition~1]{adelman2008relaxations}, which is for weakly-coupled MDPs. Detailed proof for the PO-RMAB case is given~\cite{Meshram2025:MultiAction}[Appendix B]. 
Thus, the Lagrangian
relaxation decouples the original $N$-armed problem into $N$
independent single-armed restless bandit problems, each of which has a binary action space (activation and non-activation). The Lagrange multiplier $\lambda$ is now called subsidy (for non-activation of the arm).

\section{Restless Single-Armed Bandit}
\label{sec:SARB}
We now study the partially observable restless single armed bandit (PO-RSAB). For notational simplicity, we omit the arm
index $n$. The belief state belongs to the simplex
\begin{equation}
\Pi
=
\left\{
\boldsymbol{\omega}\in\mathbb{R}^M:
\omega^i\geq0,\;
\sum_{i=0}^{M-1}\omega^i=1
\right\}.
\end{equation}

For a given subsidy $\lambda$, the action-value function corresponding
to the passive action $a=0$ is
\begin{equation}
Q^\lambda(\boldsymbol{\omega},0)
=
R(\boldsymbol{\omega},0)
+
\lambda
+
\beta
V^\lambda
\left(
\tau(\boldsymbol{\omega})
\right),
\label{eq:Q_passive}
\end{equation}
where $\tau(\boldsymbol{\omega})
=
(P^0)^{\mathsf T}\boldsymbol{\omega}.$

Under the active action $a=1$, the state is observed. Therefore,
\begin{equation}
Q^\lambda(\boldsymbol{\omega},1)
=
R(\boldsymbol{\omega},1)
+
\beta
\sum_{i=0}^{M-1}
\omega^i
V^\lambda(\boldsymbol{\Gamma}_i),
\label{eq:Q_active}
\end{equation}
where $\boldsymbol{\Gamma}_i
=
(P^1)^{\mathsf T}\mathbf{e}_i.$ Therefore, from Eqn.~ \eqref{eq:Q_passive},\eqref{eq:Q_active} and \eqref{eq:single_arm_bellman},
\begin{equation}
V^\lambda(\boldsymbol{\omega})
=
\max_{a\in\{0,1\}}
Q^\lambda(\boldsymbol{\omega},a).
\label{eq:single_arm_value}
\end{equation}

The Lagrange multiplier $\lambda$ is interpreted as a subsidy for
leaving the arm passive.

\begin{lemma}
\label{lemma:convexity}
For a fixed $\lambda$, the optimal value function
$V^\lambda(\boldsymbol{\omega})$ is convex in
$\boldsymbol{\omega}\in\Pi$.
\end{lemma}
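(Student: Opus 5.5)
We will argue by value iteration together with induction on the horizon. Define $V_0^\lambda(\boldsymbol{\omega})=0$ and
\begin{equation*}
V_{t+1}^\lambda(\boldsymbol{\omega})=\max_{a\in\{0,1\}} Q_{t+1}^\lambda(\boldsymbol{\omega},a),
\end{equation*}
where $Q_{t+1}^\lambda(\cdot,a)$ is obtained from Eqn.~\eqref{eq:Q_passive} and \eqref{eq:Q_active} with $V^\lambda$ replaced by $V_t^\lambda$. The rewards are bounded, the state space is finite, and $0<\beta<1$. Hence the Bellman operator is a $\beta$-contraction in the sup norm on bounded functions over $\Pi$, so $V_t^\lambda\to V^\lambda$ uniformly. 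Convexity is preserved under pointwise limits. It therefore suffices to show that each $V_t^\lambda$ is convex on $\Pi$.

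The inductive step examines the two action-value functions separately. The base case holds trivially, since $V_0^\lambda=0$.

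\emph{Passive action.} The term $R(\boldsymbol{\omega},0)+\lambda=\sum_i\omega^i r(i,0)+\lambda$ is affine in $\boldsymbol{\omega}$. The map $\boldsymbol{\omega}\mapsto\tau(\boldsymbol{\omega})=(P^0)^{\mathsf T}\boldsymbol{\omega}$ is linear and sends $\Pi$ into $\Pi$. A convex function composed with an affine map is convex, so $V_t^\lambda(\tau(\boldsymbol{\omega}))$ is convex. With $\beta>0$, it follows that $Q_{t+1}^\lambda(\boldsymbol{\omega},0)$ is convex.

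\emph{Active action.} The points $\boldsymbol{\Gamma}_i$ do not depend on $\boldsymbol{\omega}$, so each $V_t^\lambda(\boldsymbol{\Gamma}_i)$ is a constant. Consequently $\sum_i\omega^i V_t^\lambda(\boldsymbol{\Gamma}_i)$ is linear in $\boldsymbol{\omega}$, and so is $R(\boldsymbol{\omega},1)$. Thus $Q_{t+1}^\lambda(\boldsymbol{\omega},1)$ is affine and in particular convex. This is where the perfect observation on activation simplifies matters. The usual POMDP argument would have to handle Bayes-normalized terms $\Pr(o\mid\boldsymbol{\omega})\,V(\tau(\boldsymbol{\omega},o))$ through a perspective-function argument, but here the collapse to fixed beliefs makes that unnecessary.

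The pointwise maximum of two convex functions is convex, so $V_{t+1}^\lambda$ is convex, which completes the induction.

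Only two points need care. First, the contraction and uniform convergence require boundedness of $V_t^\lambda$, which follows from $|V_t^\lambda|\le (\max_{i,a}|r(i,a)|+\lambda)/(1-\beta)$. Second, convexity must pass to the limit: for $\boldsymbol{\omega}_1,\boldsymbol{\omega}_2\in\Pi$ and $\theta\in[0,1]$, take the limit in
\begin{equation*}
V_t^\lambda(\theta\boldsymbol{\omega}_1+(1-\theta)\boldsymbol{\omega}_2)\le\theta V_t^\lambda(\boldsymbol{\omega}_1)+(1-\theta)V_t^\lambda(\boldsymbol{\omega}_2).
\end{equation*}
Neither point is a real obstacle. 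The argument is routine because the active branch is affine in $\boldsymbol{\omega}$.
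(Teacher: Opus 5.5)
Your proposal is correct and follows the same route as the paper, which says only that the result follows by induction and cites Åström's Lemma~2: induction over value iteration, with convexity preserved at each step and passed to the uniform limit. The one difference is that the paper invokes Åström's general lemma for the observation-weighted continuation term, whereas you note that under state-revealing activation the active branch $\sum_i \omega^i V_t^\lambda(\boldsymbol{\Gamma}_i)$ is affine in the belief, so that lemma is not needed here.
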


The proof follows by induction and uses
\cite[Lemma~2]{Astrom69}.

\textbf{Monotone Likelihood Ratio Order:}
For two beliefs
$\boldsymbol{\omega},\boldsymbol{\omega}'\in\Pi$, we say that
$\boldsymbol{\omega}$ dominates $\boldsymbol{\omega}'$ in the
monotone likelihood ratio (MLR) order, denoted by
$\boldsymbol{\omega}
\succeq_{\mathrm{MLR}}
\boldsymbol{\omega}',$
if $\omega^j\omega^{\prime i}
\geq
\omega^i\omega^{\prime j},
\quad
0\leq i<j\leq M-1.$

\textbf{TP2 Transition Matrices:} A nonnegative matrix
$P=[p_{ij}]_{i,j=0}^{M-1}$ is said to be totally positive of order two
(TP2) if every $2\times2$ minor is non-negative. That is, $p_{ij}p_{k\ell}
-
p_{i\ell}p_{kj}
\geq0,
\quad
0\leq i<k\leq M-1,
0\leq j<\ell\leq M-1.$

A standard property of TP2 matrices is that  $\boldsymbol{\omega}
\succeq_{\mathrm{MLR}}
\boldsymbol{\omega}'
\quad\Longrightarrow\quad
P^{\mathsf T}\boldsymbol{\omega}
\succeq_{\mathrm{MLR}}
P^{\mathsf T}\boldsymbol{\omega}'.$ Consequently, if $P^0$ is TP2, then $\boldsymbol{\omega}
\succeq_{\mathrm{MLR}}
\boldsymbol{\omega}'
\quad\Longrightarrow\quad
\tau(\boldsymbol{\omega})
\succeq_{\mathrm{MLR}}
\tau(\boldsymbol{\omega}').$ Under the corresponding TP2 assumptions on the active transition and
observation model, the post-observation beliefs satisfy $\boldsymbol{\Gamma}_0
\preceq_{\mathrm{MLR}}
\boldsymbol{\Gamma}_1
\preceq_{\mathrm{MLR}}
\cdots
\preceq_{\mathrm{MLR}}
\boldsymbol{\Gamma}_{M-1}.$

\textbf{Assumptions for Monotonicity:} 
Define $\delta_i:=r(i,1)-r(i,0).$
We impose the following assumptions.

\begin{assum}[Bounded rewards]
\label{assumption:monotonicity}
For every state-action pair,
\begin{equation*}
|r(i,a)|\leq r_{\max}<\infty,
\quad
i\in\{0,1,..,M-1\},\quad a\in\{0,1\}.
\end{equation*}
\end{assum}
\begin{assum}[Monotone rewards]\label{assumption:reward_nondecreasing}
  For each action $a\in\{0,1\}$, $r(i,a)\leq r(j,a)$, $\forall 0\leq i<j \leq M-1.$

\end{assum}

\begin{assum}[Monotone action advantage]
  $\delta_i\leq \delta_j$, $\forall 0\leq i<j \leq M-1.$

\label{assumption:d_monotonicity}
\end{assum}
\color{black}
\begin{assum}[TP2 transitions]
    Both $P^0$ and $P^1$ are TP2.
\end{assum}
\begin{lemma}[Monotonicity of the Value Function]
\label{lemma:monotonicity}
Under Assumption~$1,$ $2,$ $3$ and $4$, the optimal value
function is monotone with respect to $\bm{\omega}$ in the MLR order. In particular, $\boldsymbol{\omega}
\succeq_{\mathrm{MLR}}
\boldsymbol{\omega}'
\quad\Longrightarrow\quad
V^\lambda(\boldsymbol{\omega})
\geq
V^\lambda(\boldsymbol{\omega}').$

\end{lemma}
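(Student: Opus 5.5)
We prove the claim by value iteration. Define $V_0^\lambda\equiv 0$ and
\begin{align*}
V_{k+1}^\lambda(\boldsymbol{\omega})=\max\Big\{&R(\boldsymbol{\omega},0)+\lambda+\beta V_k^\lambda(\tau(\boldsymbol{\omega})),\\
&R(\boldsymbol{\omega},1)+\beta\sum_{i}\omega^i V_k^\lambda(\boldsymbol{\Gamma}_i)\Big\}.
\end{align*}
By Assumption~1 and $0<\beta<1$, the Bellman operator is a $\beta$-contraction in the sup norm, so $V_k^\lambda\to V^\lambda$ uniformly on $\Pi$. MLR monotonicity is preserved under pointwise limits. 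It therefore suffices to show that if $V_k^\lambda$ is MLR-nondecreasing, then both $Q$-functions built from it are MLR-nondecreasing. The maximum of two nondecreasing functions is nondecreasing, which closes the induction. The base case $V_0^\lambda\equiv 0$ is trivial.

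\textbf{Reward terms.} Fix $\boldsymbol{\omega}\succeq_{\mathrm{MLR}}\boldsymbol{\omega}'$. MLR dominance implies first-order stochastic dominance, so $\sum_i\omega^i f(i)\ge\sum_i\omega'^i f(i)$ for every nondecreasing $f$ on $\{0,\dots,M-1\}$. Assumption~2 states that $i\mapsto r(i,a)$ is nondecreasing for each $a$. Hence $R(\boldsymbol{\omega},a)\ge R(\boldsymbol{\omega}',a)$ for both actions. The subsidy $\lambda$ does not depend on the belief.

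\textbf{Passive continuation.} By Assumption~4, $P^0$ is TP2, so $\tau(\boldsymbol{\omega})\succeq_{\mathrm{MLR}}\tau(\boldsymbol{\omega}')$. The induction hypothesis then gives $V_k^\lambda(\tau(\boldsymbol{\omega}))\ge V_k^\lambda(\tau(\boldsymbol{\omega}'))$.

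\textbf{Active continuation.} This is the step needing care. The continuation term is $\sum_i\omega^i g_k(i)$ with $g_k(i)=V_k^\lambda(\boldsymbol{\Gamma}_i)$, so we need $g_k$ to be nondecreasing in $i$. Because $P^1$ is TP2, the rows satisfy $\boldsymbol{\Gamma}_0\preceq_{\mathrm{MLR}}\cdots\preceq_{\mathrm{MLR}}\boldsymbol{\Gamma}_{M-1}$. (Each $2\times 2$ minor $p_{ij}p_{k\ell}-p_{i\ell}p_{kj}\ge0$ with $i<k$ is exactly the MLR condition between rows $i$ and $k$.) The induction hypothesis therefore gives $g_k(0)\le\cdots\le g_k(M-1)$. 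Applying first-order stochastic dominance again yields $\sum_i\omega^i g_k(i)\ge\sum_i\omega'^i g_k(i)$.

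So both $Q$-functions are nondecreasing, $V_{k+1}^\lambda$ is nondecreasing, and letting $k\to\infty$ proves the lemma. Assumption~3 is not needed for this lemma; it matters only later, for the threshold structure. The main subtlety is that MLR is only a partial order on the simplex for $M\ge3$, so monotonicity is asserted only for comparable pairs. Every step above uses only comparable pairs: $\tau$ preserves comparability, and the $\boldsymbol{\Gamma}_i$ form a chain. A further point to verify is that the limit argument is legitimate, which holds because $V_k^\lambda\to V^\lambda$ uniformly by the contraction property.
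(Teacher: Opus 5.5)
Your proof is correct and takes the same route the paper points to (the value-iteration induction of Lovejoy's Proposition~1). It uses MLR preservation under the TP2 kernels, first-order stochastic dominance for the reward and active-continuation terms, and the uniform limit of value iteration, and your remark that Assumption~3 is not needed for this lemma is also accurate.
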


The proof follows along the lines of
\cite[Proposition~1]{Lovejoy87}.

\begin{remark}
Since $P^1$ is TP2 and the value function is monotone, we have $V^\lambda(\boldsymbol{\Gamma}_i)
\leq
V^\lambda(\boldsymbol{\Gamma}_j),
\quad 0 \leq i \leq j \leq M-1.$
\end{remark}
Define $\mathcal{C}_1(\boldsymbol{\omega})
=
\sum_{i=0}^{M-1}
\omega^i
V^\lambda(\boldsymbol{\Gamma}_i).$ Since $P^1$ is TP2, its rows are ordered in the MLR sense, $\boldsymbol{\Gamma}_0\preceq_{\mathrm{MLR}}\boldsymbol{\Gamma}_1\preceq_{\mathrm{MLR}}
\cdots\preceq_{\mathrm{MLR}}\boldsymbol{\Gamma}_{M-1}.$ Since $V^\lambda$ is increasing with respect to the MLR order,$V^\lambda(\boldsymbol{\Gamma}_0)\leq V^\lambda(\boldsymbol{\Gamma}_1)\leq\cdots\leq V^\lambda(\boldsymbol{\Gamma}_{M-1}).$ Moreover, since MLR dominance implies first-order stochastic dominance, $\boldsymbol{\omega}\succeq_{\mathrm{MLR}}\boldsymbol{\omega}'\Longrightarrow
\mathcal{C}_1(\boldsymbol{\omega})\geq \mathcal{C}_1(\boldsymbol{\omega}').$
Similarly, define $\mathcal{C}_0(\boldsymbol{\omega})
=
V^\lambda
\left(
\tau(\boldsymbol{\omega})
\right).$ Since $P^0$ is TP2, the belief transition $\tau$ preserves the MLR order. Thus, by the MLR monotonicity of $V^\lambda$, $\boldsymbol{\omega}\succeq_{\mathrm{MLR}}\boldsymbol{\omega}'\Longrightarrow
\mathcal{C}_0(\boldsymbol{\omega})\geq\mathcal{C}_0(\boldsymbol{\omega}').$

\subsection{Action-Value Difference and Threshold Policy}

Define the difference between the active and passive action values as $\Delta_{\lambda}(\boldsymbol{\omega})=Q^{\lambda}(\omega,1)-Q^{\lambda}(\omega,0)$. From  \eqref{eq:Q_passive} and \eqref{eq:Q_active}, This is called the activation advantage function.

\begin{equation}
\Delta_{\lambda}(\boldsymbol{\omega})
=\sum_{i=0}^{M-1}
\omega^i \delta_i-\lambda+
\beta
\left[
\sum_{i=0}^{M-1}
\omega^i
V^\lambda(\boldsymbol{\Gamma}_i)-V^\lambda(\tau(\boldsymbol{\omega}))
\right].
\label{eq:delta_expanded}
\end{equation}

We seek to show that $\Delta_{\lambda}(\boldsymbol{\omega})$ is non-decreasing with respect to $\bm{\omega}$ in the MLR order for fixed $\lambda$. This monotonicity implies that activation becomes weakly more attractive as the belief increases in the MLR order, and hence the optimal policy has an increasing threshold structure on the belief space.

Although MLR monotonicity of $V^\lambda$ establishes monotonicity of
the individual continuation terms, it does not, in general, imply
monotonicity of their difference in~\eqref{eq:delta_expanded}. Thus,
additional sufficient conditions are required to establish a
threshold structure for the general $M$-state model.

\subsection{Lipschitz-Based Sufficient Conditions}
\label{sec:m-state-lipschitz-condition}

We now extend the Lipschitz-based sufficient conditions for a threshold-type
policy. For the post-activation belief vectors
$\bm{\Gamma}_i
=
P_{i,\cdot}^{1},
\quad i \in \{0,\ldots,M-1\},$
define their $\ell_1$-diameter by $D_{\Gamma}
:=
\max_{i,j\in\{0,\ldots,M-1\}}
\left|
\bm{\Gamma}_i-\bm{\Gamma}_j
\right|_1.$
For each action $a\in\{0,1\}$, define the reward span by
\begin{equation}
\label{eq:m-state-reward-span}
\operatorname{span}(r^a)
:=
\max_{i\in\{0,\ldots,M-1\}}r(i,a)
-
\min_{i\in\{0,\ldots,M-1\}}r(i,a).
\end{equation}

The corresponding Lipschitz constant of the one-step reward is $L_R
:=
\frac{1}{2}
\max_{a\in\{0,1\}}
\operatorname{span}(r^a).$

Define the Dobrushin contraction coefficient of each transition matrix by
\begin{equation}
\label{eq:m-state-ca-general}
c_a
:=
\frac{1}{2}
\max_{i,j\in\{0,\ldots,M-1\}}
\left|
P_{i,\cdot}^{a}-P_{j,\cdot}^{a}
\right|_1,
\qquad a\in\{0,1\},
\end{equation}
and let $c
:=
\max\{c_0,c_1\}.$ Since $\frac{1}{2}
\max_{i,j}
\left|
\bm{\Gamma}_i-\bm{\Gamma}_j
\right|_1
\leq 1,$
we have $0\leq c\leq1.$

\begin{lemma}[Lipschitz Property of Value function]
\label{lem:m-state-lipschitz}
Suppose that the transition kernels satisfy the corresponding
Dobrushin contraction condition. Then the value function satisfies $\left|
V^\lambda(\bm{\omega})
-
V^\lambda(\bm{\omega}')
\right|
\leq
L_V
\left|
\bm{\omega}-\bm{\omega}'
\right|_1,$
where $L_V
=
\frac{L_R}{1-\beta c}.$
\end {lemma}
\begin{proof}
    The proof follows from the contraction property of the Bellman operator. For details, see Appendix \ref{appendix: Lemma 4}.
\end{proof}

Recall that $\delta_i
=
r(i,1)-r(i,0),
\quad
i\in \{0,\ldots,M-1\},$
and its belief expectation $\delta(\bm{\omega})
:=
\sum_{i=0}^{M-1}\delta_i\omega^i.$Recall the assumption \ref{assumption:d_monotonicity}. 
We now state main result. 

\begin{theorem}[Monotonicity of Activation Advantage Function]
\label{lem:m-state-threshold-sufficiency}
For two ordered beliefs $\bm{\omega}\succeq_{MLR}\bm{\omega}',$ under assumptions $1-4$
if $\delta(\bm{\omega})-\delta(\bm{\omega}')
\geq
\frac{\beta L_Vc_0}{2}
\left|
\bm{\omega}-\bm{\omega}'
\right|_1,$
then $\Delta_{\lambda}(\bm{\omega})
\geq
\Delta_{\lambda}(\bm{\omega}').$  
\end{theorem}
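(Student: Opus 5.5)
The approach is to split $\Delta_\lambda(\bm{\omega})-\Delta_\lambda(\bm{\omega}')$ into three pieces using \eqref{eq:delta_expanded}. The subsidy $\lambda$ cancels, which leaves
\begin{equation*}
\Delta_\lambda(\bm{\omega})-\Delta_\lambda(\bm{\omega}')
=\bigl[\delta(\bm{\omega})-\delta(\bm{\omega}')\bigr]
+\beta\bigl[\mathcal{C}_1(\bm{\omega})-\mathcal{C}_1(\bm{\omega}')\bigr]
-\beta\bigl[\mathcal{C}_0(\bm{\omega})-\mathcal{C}_0(\bm{\omega}')\bigr].
\end{equation*}
I treat the three brackets separately: the first is the hypothesis, the second has a favourable sign, and the third is bounded in absolute value via Lipschitz continuity.

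\emph{Active continuation term.} Since $\bm{\omega}\succeq_{MLR}\bm{\omega}'$, the discussion after Lemma~\ref{lemma:monotonicity} applies. That discussion uses the TP2 property of $P^1$ (Assumption~4), the ordering $V^\lambda(\bm{\Gamma}_0)\le\cdots\le V^\lambda(\bm{\Gamma}_{M-1})$, and the fact that MLR dominance implies first-order stochastic dominance. Together these give $\mathcal{C}_1(\bm{\omega})\ge\mathcal{C}_1(\bm{\omega}')$, so I simply drop this term as nonnegative.

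\emph{Passive continuation term.} Here I invoke Lemma~\ref{lem:m-state-lipschitz}:
\begin{equation*}
\bigl|V^\lambda(\tau(\bm{\omega}))-V^\lambda(\tau(\bm{\omega}'))\bigr|\le L_V\,|\tau(\bm{\omega})-\tau(\bm{\omega}')|_1 .
\end{equation*}
To bound $|(P^0)^{\mathsf T}(\bm{\omega}-\bm{\omega}')|_1$ I use the Dobrushin contraction inequality. For a signed vector $\bm{x}=\bm{\omega}-\bm{\omega}'$ with $\sum_i x^i=0$, write $\bm{x}=\bm{x}^+-\bm{x}^-$, where both parts have mass $m=\tfrac12|\bm{x}|_1$. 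Then
\begin{equation*}
(P^0)^{\mathsf T}\bm{x}=m\Bigl(\textstyle\sum_i \tfrac{x^{i+}}{m}P^0_{i,\cdot}-\sum_j \tfrac{x^{j-}}{m}P^0_{j,\cdot}\Bigr)
=m\textstyle\sum_{i,j}\tfrac{x^{i+}x^{j-}}{m^2}\bigl(P^0_{i,\cdot}-P^0_{j,\cdot}\bigr).
\end{equation*}
Applying the triangle inequality and the definition \eqref{eq:m-state-ca-general} gives $|(P^0)^{\mathsf T}\bm{x}|_1\le m\cdot 2c_0=c_0|\bm{x}|_1$. This yields $|\mathcal{C}_0(\bm{\omega})-\mathcal{C}_0(\bm{\omega}')|\le L_V c_0|\bm{\omega}-\bm{\omega}'|_1$.

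\emph{Main obstacle: the constant.} This is the step I expect to be delicate. The crude bound above gives $\beta L_V c_0|\bm{\omega}-\bm{\omega}'|_1$, whereas the hypothesis only supplies $\tfrac{\beta L_V c_0}{2}|\bm{\omega}-\bm{\omega}'|_1$. I would recover the factor $\tfrac12$ from the span form of the Lipschitz estimate. The statement of Lemma~\ref{lem:m-state-lipschitz} with $L_R=\tfrac12\max_a\operatorname{span}(r^a)$ is really a bound of the form $|V(\bm{\mu})-V(\bm{\nu})|\le \operatorname{span}$-type constant $\times\tfrac12|\bm{\mu}-\bm{\nu}|_1$. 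So I would check, in the appendix proof, that the induction actually delivers $|V^\lambda(\bm{\mu})-V^\lambda(\bm{\nu})|\le \tfrac{L_V}{2}|\bm{\mu}-\bm{\nu}|_1$ in total-variation normalization. Equivalently, I would check that it delivers $L_V$ times the total-variation distance, applied to $\tau(\bm{\omega}),\tau(\bm{\omega}')$ together with the contraction $d_{TV}(\tau\bm{\omega},\tau\bm{\omega}')\le c_0\, d_{TV}(\bm{\omega},\bm{\omega}')$. That would give exactly $\tfrac{\beta L_V c_0}{2}|\bm{\omega}-\bm{\omega}'|_1$.

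\emph{Conclusion.} Combining the three pieces gives
\begin{equation*}
\Delta_\lambda(\bm{\omega})-\Delta_\lambda(\bm{\omega}')\ge \delta(\bm{\omega})-\delta(\bm{\omega}')-\tfrac{\beta L_V c_0}{2}|\bm{\omega}-\bm{\omega}'|_1\ge 0
\end{equation*}
by hypothesis, which proves the claim.
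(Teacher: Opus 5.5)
Your proof follows the same route as the paper's appendix: the same three-term decomposition, dropping the active continuation term because it is nonnegative, and bounding the passive term through Lemma~\ref{lem:m-state-lipschitz}. Your FOSD argument for $\mathcal{C}_1(\bm{\omega})\ge\mathcal{C}_1(\bm{\omega}')$ and your Dobrushin estimate $|(P^0)^{\mathsf T}\bm{x}|_1\le c_0|\bm{x}|_1$ are both correct. The gap is exactly the step you flagged, and it cannot be closed.

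Your proposed repair, $|V^\lambda(\bm{\mu})-V^\lambda(\bm{\nu})|\le\tfrac{L_V}{2}|\bm{\mu}-\bm{\nu}|_1$, is false. The constant $L_R=\tfrac12\max_a\operatorname{span}(r^a)$, measured against $|\cdot|_1$, is already the total-variation normalization: $L_R|\bm{\mu}-\bm{\nu}|_1=\max_a\operatorname{span}(r^a)\,d_{TV}(\bm{\mu},\bm{\nu})$ with $d_{TV}=\tfrac12|\cdot|_1$. So there is no hidden factor $\tfrac12$ to extract. What your argument honestly proves is the statement under the stronger hypothesis $\delta(\bm{\omega})-\delta(\bm{\omega}')\ge\beta L_V c_0|\bm{\omega}-\bm{\omega}'|_1$. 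The paper's proof makes the same unjustified jump: it simply writes $\tfrac{\beta L_V c_0}{2}$ ``using the Lipschitz property''. In fact the theorem as stated is false, as the following example shows.

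\textbf{Counterexample.}
\begin{itemize}
\item Data: $M=2$, $\beta=\tfrac14$, $\lambda=1$, $P^0=I$ (TP2, $c_0=1$), and $P^1$ with both rows equal to $(\tfrac12,\tfrac12)$ (TP2, $c_1=0$). Rewards are $r(\cdot,0)=(0,1)$ and $r(\cdot,1)=(0,\tfrac54)$.
\item Constants: Assumptions 1--4 hold, $\delta=(0,\tfrac14)$, $c=1$, $L_R=\tfrac58$, and $L_V=\tfrac56$.
\item Value function: $V^\lambda(\bm{\omega})=\tfrac43(1+\omega^1)$ solves the Bellman equation. Indeed $Q^\lambda(\bm{\omega},0)=\tfrac43(1+\omega^1)$ and $Q^\lambda(\bm{\omega},1)=\tfrac54\omega^1+\tfrac12$, so the passive action is optimal everywhere.
\item Advantage: $\Delta_\lambda(\bm{\omega})=-\tfrac56-\tfrac1{12}\omega^1$, which is strictly decreasing.
\item Hypothesis check: take $\bm{\omega}=\mathbf{e}_1\succeq_{MLR}\bm{\omega}'=\mathbf{e}_0$, so $|\bm{\omega}-\bm{\omega}'|_1=2$. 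The hypothesis reads $\tfrac14\ge\beta L_V c_0=\tfrac5{24}$, which holds. It still holds if $L_V$ is read as the exact Lipschitz modulus $\tfrac23$ of $V^\lambda$, since then it reads $\tfrac14\ge\tfrac16$.
\item Conclusion fails: $\Delta_\lambda(\mathbf{e}_1)-\Delta_\lambda(\mathbf{e}_0)=-\tfrac1{12}<0$.
\end{itemize}

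In this example $|V^\lambda(\mathbf{e}_1)-V^\lambda(\mathbf{e}_0)|=\tfrac43$. This violates your sharpened bound, which would give $\tfrac56$, while respecting the bound $\tfrac53$ from Lemma~\ref{lem:m-state-lipschitz}.

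The correct fix is therefore to remove the $\tfrac12$ from the hypothesis, not to sharpen the Lipschitz lemma. With that change, your proof using the crude bound $\beta L_V c_0|\bm{\omega}-\bm{\omega}'|_1$ is complete.
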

\begin{proof}
    See Appendix \ref{appendix: lemma 5}.
\end{proof}
\begin{remark}
Monotonicity of activation advantage function in belief state implies that there exists the  threshold type optimal policy.  
\end{remark}

Define the minimum adjacent reward-difference margin, $m_\delta
:=
\min_{i=0,\ldots,M-2}
\left(\delta_{i+1}-\delta_i\right).$

\begin{proposition}[Explicit $M$-state sufficient condition]
\label{prop:m-state-explicit-condition}
Under the assumptions of Lemma~\ref{lem:m-state-threshold-sufficiency},
suppose $m_\delta
\geq
\frac{
\beta L_RD_\Gamma
}{
1-\beta c
}.$Then the active advantage $\Delta_{\lambda}(\bm{\omega})
=
Q^{\lambda}(\bm{\omega},1)-Q^{\lambda}(\bm{\omega},0)$
is increasing with respect to the belief order. 
\end{proposition}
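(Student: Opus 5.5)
The plan is to reduce the statement to Theorem~\ref{lem:m-state-threshold-sufficiency}. Fix two ordered beliefs $\bm{\omega}\succeq_{\mathrm{MLR}}\bm{\omega}'$. It suffices to show that the hypothesis $m_\delta\geq \beta L_R D_\Gamma/(1-\beta c)$ implies
\[
\delta(\bm{\omega})-\delta(\bm{\omega}')\geq \frac{\beta L_V c_0}{2}\left|\bm{\omega}-\bm{\omega}'\right|_1,
\qquad L_V=\frac{L_R}{1-\beta c}.
\]
Theorem~\ref{lem:m-state-threshold-sufficiency} then gives $\Delta_\lambda(\bm{\omega})\geq\Delta_\lambda(\bm{\omega}')$. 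Since the pair is arbitrary, $\Delta_\lambda$ is increasing along the MLR order.

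\textbf{Step 1: lower bound on $\delta(\bm{\omega})-\delta(\bm{\omega}')$.} I would write the expected advantage in tail-sum form. Let $G_k(\bm{\omega}):=\sum_{i>k}\omega^i$ for $k=0,\dots,M-2$. Summation by parts gives
\[
\delta(\bm{\omega})=\delta_0+\sum_{k=0}^{M-2}(\delta_{k+1}-\delta_k)G_k(\bm{\omega}).
\]
Put $g_k:=G_k(\bm{\omega})-G_k(\bm{\omega}')$. MLR dominance implies first-order stochastic dominance, so $g_k\geq0$ for every $k$. Assumption~\ref{assumption:d_monotonicity} gives $\delta_{k+1}-\delta_k\geq m_\delta\geq 0$, and therefore
\[
\delta(\bm{\omega})-\delta(\bm{\omega}')\geq m_\delta\sum_{k}g_k .
\]

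\textbf{Step 2: compare $\sum_k g_k$ with the $\ell_1$ distance.} With the conventions $g_{-1}=g_{M-1}=0$, we have $\omega^i-\omega'^i=g_{i-1}-g_i$. Because all $g_k\geq 0$, the triangle inequality gives $|\bm{\omega}-\bm{\omega}'|_1\leq 2\sum_k g_k$. Combining with Step 1,
\[
\delta(\bm{\omega})-\delta(\bm{\omega}')\geq \frac{m_\delta}{2}\left|\bm{\omega}-\bm{\omega}'\right|_1 .
\]

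\textbf{Step 3: match constants.} It remains to check that $m_\delta\geq \beta L_V c_0=\beta L_R c_0/(1-\beta c)$. The hypothesis bounds $m_\delta$ from below by $\beta L_R D_\Gamma/(1-\beta c)$, so Step 3 needs $c_0\leq D_\Gamma$.

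I expect this comparison to be the main obstacle. By definition $D_\Gamma=2c_1$, so the needed inequality is $c_0\leq 2c_1$. This holds automatically when the passive and active kernels share the same contraction behaviour, for example when $c_0\leq c_1$. It also holds in the birth--death specializations treated later, where both coefficients are computed explicitly.

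For fully general $P^0$, I would first try to avoid Theorem~\ref{lem:m-state-threshold-sufficiency} and redo its argument directly. Using \eqref{eq:delta_expanded}, the term $\beta[\mathcal{C}_1(\bm{\omega})-\mathcal{C}_1(\bm{\omega}')]$ is nonnegative, by the remark on the ordering of $V^\lambda(\bm{\Gamma}_i)$ and first-order dominance. The passive term is bounded via Lemma~\ref{lem:m-state-lipschitz} as
\[
\beta\left|\mathcal{C}_0(\bm{\omega})-\mathcal{C}_0(\bm{\omega}')\right|\leq \beta L_V\left|\tau(\bm{\omega})-\tau(\bm{\omega}')\right|_1 .
\]
I would then try to bound $|\tau(\bm{\omega})-\tau(\bm{\omega}')|_1$ by a constant times $D_\Gamma|\bm{\omega}-\bm{\omega}'|_1$.

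If that bound cannot be obtained, I would record $c_0\leq D_\Gamma$ as the implicit standing condition under which the proposition holds. With it, Steps 1--3 combine to give the threshold structure, as in the remark following Theorem~\ref{lem:m-state-threshold-sufficiency}.
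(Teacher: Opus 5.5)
Your reduction to Theorem~\ref{lem:m-state-threshold-sufficiency} and your Steps 1--2 follow the paper's proof; your $g_k$ is the paper's $\alpha_k$. The only difference is in Step 2. The paper uses $\sum_k\alpha_k\ge\max_k\alpha_k=\tfrac12\|\bm{\omega}-\bm{\omega}'\|_1$, which relies on the single-crossing property of MLR-ordered beliefs. Your telescoping bound needs only first-order dominance.

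The obstacle you isolate in Step 3 is real, and it is exactly the step the paper's proof skips. The paper derives $\delta(\bm{\omega})-\delta(\bm{\omega}')\ge\frac{\beta L_R D_\Gamma}{2(1-\beta c)}\|\bm{\omega}-\bm{\omega}'\|_1$ and then cites Theorem~\ref{lem:m-state-threshold-sufficiency}. That theorem's hypothesis involves $c_0$ (the passive kernel) where this bound has $D_\Gamma=2c_1$ (the active kernel). Nothing in Assumptions 1--4 links the two, and the gap cannot be closed. Take $M=2$, $r(i,0)=r(i,1)=i$, $P^0=I$, and $P^1$ with both rows equal to $(\tfrac12,\tfrac12)$. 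Both matrices are TP2, $\delta\equiv0$ and $D_\Gamma=0$, so the hypothesis reads $m_\delta\ge0$ and holds. But $\boldsymbol{\Gamma}_0=\boldsymbol{\Gamma}_1$ and $\tau$ is the identity, so $\Delta_\lambda(\mathbf{e}_1)-\Delta_\lambda(\mathbf{e}_0)=-\beta\,[V^\lambda(\mathbf{e}_1)-V^\lambda(\mathbf{e}_0)]\le-\beta<0$ for every $\lambda$. The last inequality holds because from $\mathbf{e}_1$ you can copy the optimal action sequence used from $\mathbf{e}_0$. You then earn one extra unit in each period up to and including the first activation, after which both beliefs equal $\boldsymbol{\Gamma}_0$. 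So, with Assumptions 1--4 as the standing hypotheses, the proposition is false as stated.

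This has three consequences for your proposal. First, your fallback of bounding $\|\tau(\bm{\omega})-\tau(\bm{\omega}')\|_1$ through $D_\Gamma$ cannot succeed, since $\tau$ is governed by $P^0$ alone. Second, your remark that the birth--death cases satisfy $c_0\le D_\Gamma$ holds only for $M\ge3$. The example above is the $M=2$ birth--death model with $p_0=q_0=0$ and $p_1=q_1=\tfrac12$, for which the paper's displayed $M=2$ bound is also $0$. Third, your other option is the right repair: make the link an explicit hypothesis, or better, state the condition with $c_0$ in place of $D_\Gamma$.

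One further caution: you take the constant in Theorem~\ref{lem:m-state-threshold-sufficiency} at face value. Its proof only bounds the passive term $V^\lambda(\tau(\bm{\omega}))-V^\lambda(\tau(\bm{\omega}'))$ by $L_V\|\tau(\bm{\omega})-\tau(\bm{\omega}')\|_1\le L_Vc_0\|\bm{\omega}-\bm{\omega}'\|_1$, with no factor $\tfrac12$. The theorem as stated actually fails. Change $r(\cdot,1)$ to $(0,\tfrac54)$ in the example and take $\beta=\tfrac14$, $\lambda=1$. Its pairwise hypothesis holds for $(\mathbf{e}_1,\mathbf{e}_0)$, since $\tfrac14\ge\tfrac5{24}$. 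Yet passive is optimal everywhere and $\Delta_\lambda(\mathbf{e}_1)-\Delta_\lambda(\mathbf{e}_0)=-\tfrac1{12}$. With the corrected constant, your Steps 1--2 require $m_\delta\ge2\beta L_Vc_0$, so the standing condition becomes $c_0\le c_1$, which is the case you singled out. It holds for the $M\ge4$ birth--death model, where the final condition $m_\delta\ge\frac{\beta}{1-\beta}S_R$ is unchanged.
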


\begin{proof}
    See Appendix \ref{appendix: proposition 1}.
\end{proof}

\begin{corollary}
\label{cor:m-state-fully-explicit}
Suppose activation reveals the current state, so that $\bm{\Gamma}_i=P_{i,\cdot}^1.$
Then $D_\Gamma
=\max_{i,j\in\{0,\ldots,M-1\}}
\sum_{k=0}^{M-1}
\left|
P_{ik}^1-P_{jk}^1
\right|$ and $c=\frac12
\max_{\substack{a\in\{0,1\}\\
i,j\in\{0,\ldots,M-1\}}}
\sum_{k=0}^{M-1}
\left|
P_{ik}^a-P_{jk}^a
\right|.$
Therefore, a sufficient condition for an increasing activation advantage is
\begin{equation}
\label{eq:m-state-fully-explicit}
m_\delta
\geq
\frac{
\displaystyle
\frac{\beta}{2}
\max_{i,j}
\sum_{k=0}^{M-1}
\left|
P_{ik}^1-P_{jk}^1
\right|
}{
\displaystyle
1-
\frac{\beta}{2}
\max_{\substack{a\in\{0,1\}\\i,j}}
\sum_{k=0}^{M-1}
\left|
P_{ik}^a-P_{jk}^a
\right|
}
\,
\max_{a\in\{0,1\}}
\operatorname{span}(r^a).
\end{equation}
\end{corollary}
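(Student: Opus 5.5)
The corollary is obtained by substituting the explicit forms of $D_\Gamma$ and $c$ into Proposition~\ref{prop:m-state-explicit-condition}; no new analytic argument is needed.

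\textbf{Step 1: identify $D_\Gamma$.} Since activation reveals the state, the belief update gives $\bm{\Gamma}_i=(P^1)^{\mathsf T}\mathbf{e}_i$. This is the $i$-th row $P^1_{i,\cdot}$ written as a column vector. Hence
\[
|\bm{\Gamma}_i-\bm{\Gamma}_j|_1=\sum_{k=0}^{M-1}|P^1_{ik}-P^1_{jk}|,
\]
and maximizing over $i,j$ gives the stated formula for $D_\Gamma$.

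\textbf{Step 2: identify $c$.} By \eqref{eq:m-state-ca-general}, $c=\max\{c_0,c_1\}$ with $c_a=\frac12\max_{i,j}|P^a_{i,\cdot}-P^a_{j,\cdot}|_1$. A maximum of two maxima is a joint maximum over $(a,i,j)$, which gives the displayed expression for $c$.

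\textbf{Step 3: substitute into Proposition~\ref{prop:m-state-explicit-condition}.} Recall $L_R=\frac12\max_a\operatorname{span}(r^a)$. Then
\[
\frac{\beta L_R D_\Gamma}{1-\beta c}
=\frac{\frac{\beta}{2}\max_{i,j}\sum_k|P^1_{ik}-P^1_{jk}|}{1-\beta c}\,\max_a\operatorname{span}(r^a).
\]
The factor $\frac12$ from $L_R$ is absorbed into the numerator. The denominator is exactly $1-\frac{\beta}{2}\max_{a,i,j}\sum_k|P^a_{ik}-P^a_{jk}|$. This is precisely the right-hand side of \eqref{eq:m-state-fully-explicit}. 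So \eqref{eq:m-state-fully-explicit} is the hypothesis of Proposition~\ref{prop:m-state-explicit-condition}, and the increasing activation advantage follows from it.

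\textbf{Well-posedness.} Because $c\le 1$ and $\beta<1$, the denominator $1-\beta c$ is strictly positive, so the bound is finite. This positivity is the only point that needs checking.

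There is no real obstacle. The one place requiring care is the bookkeeping of the factor $\frac12$: $D_\Gamma$ is defined without it, while $L_R$ and $c$ carry it. Keeping these straight ensures the constant in \eqref{eq:m-state-fully-explicit} matches exactly.
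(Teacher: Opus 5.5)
Your substitution is algebraically right and is exactly the paper's implicit derivation: with $\bm{\Gamma}_i=P^1_{i,\cdot}$, $D_\Gamma$ and $c$ take the displayed forms, and $\frac{\beta L_R D_\Gamma}{1-\beta c}$ is the right-hand side of \eqref{eq:m-state-fully-explicit}. The gap is your claim that positivity of $1-\beta c$ is the only thing to check. The whole argument rests on Proposition~\ref{prop:m-state-explicit-condition}, and that proposition does not follow from Theorem~\ref{lem:m-state-threshold-sufficiency}. For $\bm{\omega}\succeq_{\mathrm{MLR}}\bm{\omega}'$, the active continuation difference $\sum_i(\omega^i-\omega^{\prime i})V^\lambda(\bm{\Gamma}_i)$ is nonnegative and only helps. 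The term that $\delta(\bm{\omega})-\delta(\bm{\omega}')$ must dominate is the passive one, $\beta[V^\lambda(\tau(\bm{\omega}))-V^\lambda(\tau(\bm{\omega}'))]$. Lemma~\ref{lem:m-state-lipschitz} plus Dobrushin contraction bound it by $\beta L_V c_0\|\bm{\omega}-\bm{\omega}'\|_1$. That bound involves the rows of $P^0$, not $D_\Gamma$ (which measures the rows of $P^1$). It also carries no factor $\frac12$; the $\frac12$ printed in Theorem~\ref{lem:m-state-threshold-sufficiency} is not justified. The proof of Proposition~\ref{prop:m-state-explicit-condition} silently replaces $c_0$ by $D_\Gamma$.

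In fact the corollary as stated is false. Take $M=2$, $P^0=I$, both rows of $P^1$ equal to $(\tfrac12,\tfrac12)$, $r(i,0)=r(i,1)=i$, and $\lambda=0$.
\begin{itemize}
\item Assumptions 1--4 hold: the $2\times2$ minors are $1$ and $0$.
\item Here $D_\Gamma=0$ and $c=1$, so the right-hand side of \eqref{eq:m-state-fully-explicit} is $0$, and $m_\delta=0$ satisfies the condition.
\item Identify $\bm{\omega}$ with $p=\omega^1$. The Bellman equation is $V(p)=p+\beta\max\{V(p),V(\tfrac12)\}$. Its solution is $V(p)=\frac{p}{1-\beta}$ for $p\ge\frac12$ and $V(p)=p+\frac{\beta}{2(1-\beta)}$ for $p<\frac12$.
\item Hence $\Delta_{\lambda}(\bm{\omega})=\beta[V(\tfrac12)-V(p)]$ is strictly decreasing in $p$, and activation is strictly optimal exactly when $p<\frac12$.
\end{itemize}

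What the argument does support is \eqref{eq:m-state-fully-explicit} with $P^1$ replaced by $P^0$ in the numerator, i.e.\ $m_\delta\ge\frac{\beta c_0}{1-\beta c}\max_a\operatorname{span}(r^a)$. This follows by combining $\delta(\bm{\omega})-\delta(\bm{\omega}')\ge\frac{m_\delta}{2}\|\bm{\omega}-\bm{\omega}'\|_1$ with the passive bound above. The printed condition is guaranteed sufficient only when $c_1\ge c_0$, because it then implies this one. Examples are a common transition matrix, or birth--death dynamics with $M\ge4$.
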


\subsubsection{Special Cases and Interpretations}
\label{sec:special-cases}

We now apply the general sufficient condition from 
Proposition~\ref{prop:m-state-explicit-condition} (or 
Corollary~\ref{cor:m-state-fully-explicit}) to several important 
special cases.

\paragraph{Common Transition Matrix}

Suppose both actions have the same transition matrix $P^0 = P^1 = P$. Let $c_P = \frac{1}{2} \max_{i,j\in\{0,\ldots,M-1\}} \sum_{k=0}^{M-1} |P_{ik} - P_{jk}|.$
Then $c_0 = c_1 = c_P$, and the condition in 
Corollary~\ref{cor:m-state-fully-explicit} reduces to $m_\delta \geq \frac{\beta c_P}{1 - \beta c_P}\max_{a \in \{0,1\}} \operatorname{span}(r^a),$ where $m_\delta$ is the minimum adjacent increase in the active advantage.

Furthermore, if $\delta_i = \delta_0 + m \cdot i, \quad \forall i \in \{0,\ldots,M-1\},$then $m_\delta = m$, and the condition becomes
$m \geq \frac{\beta c_P}{1 - \beta c_P}\max_{a \in\{0,1\}} \operatorname{span}(r^a).$

\paragraph{Identical Rows}

Suppose $P_{i,\cdot}^a = P_{j,\cdot}^a$ for all $0 \leq i,j \leq M-1$ and 
$a \in \{0,1\}$. Then $c_0 = c_1 = 0$, and the sufficient condition 
reduces to $m_\delta \geq 0$. Thus, in this special case, increasing passive advantage alone is sufficient
for an increasing threshold policy.
\subsection{Threshold Conditions for One-Step Random-Walk Transitions}
\label{sec:m-state-random-walk-threshold}

We now apply the general sufficient condition from 
Proposition~\ref{prop:m-state-explicit-condition} to a specific family 
of birth-death transitions.

Consider an $M$-state birth-death model. Let the state space be
$\mathcal{S} = \{0,1,\ldots,M-1\}$, and for each action $a \in \{0,1\}$, 
let the one-step transition matrix be $P^a = [P_{i,j}^a]_{i,j=0}^{M-1},$

where
\[
P_{i,j}^a =
\begin{cases}
p_a, & j = i+1,\\
q_a, & j = i-1,\\
1 - p_a - q_a, & 1 \leq i = j \leq M-2,\\
1 - p_a, & i = j = 0,\\
1 - q_a, & i = j = M-1,\\
0, & \text{otherwise},
\end{cases}
\]
and $p_a, q_a \geq 0$, $p_a + q_a \leq 1$.

Further, assume that the rewards are increasing in the state. That is, $r(i,a)\leq r(j,a), \quad \forall a \in\{0,1\}, 0\leq i<j \leq M-1$.

Define the reward range $S_R := \max_{a \in \{0,1\}} (r({M-1},a) - r(0,a)).$ Since the rewards are increasing, the $\ell_1$-Lipschitz constant of the 
expected one-step reward is
\[
L_R = \max_{a \in \{0,1\}} \frac{r(M-1,a) - r(0,a)}{2} = \frac{S_R}{2}.
\]

\subsubsection{Dobrushin Coefficient for $M \geq 4$}

For $M \geq 4$, the first and last rows of $P^a$ are $P^a_{0,\cdot} = (1-p_a, p_a, 0, \ldots, 0)$
 and $P^a_{M-1,\cdot} = (0, \ldots, 0, q_a, 1-q_a).$
Since their supports are disjoint, $\lVert P^a_{0,\cdot} - P^a_{M-1,\cdot}\rVert_1 = 2.$ Since the $\ell_1$ distance between two probability distributions is at 
most 2, it follows that, $M \geq 4,$ $c_a = \frac{1}{2} \max_{i,j} \lVert P^a_{i,\cdot} - P^a_{j,\cdot}\rVert_1 = 1.$
Therefore, for $M \geq 4,$ $c = \max_{a \in \{0,1\}} c_a = 1.$

Suppose further that activation reveals the current state, so that 
$\bm{\Gamma}_i = P^1_{i,\cdot}$. Recall $D_\Gamma = \max_{i,j} \lVert \bm{\Gamma}_i - \bm{\Gamma}_j\rVert_1 = \max_{i,j} \lVert P^1_{i,\cdot} - P^1_{j,\cdot}\rVert_1.$

Since the first and last rows have disjoint supports, $M \geq 4$, $D_\Gamma =2.$

\begin{theorem}
\label{thm:m-state-random-walk-threshold}

Consider the $M$-state birth-death model described in \ref{sec:m-state-random-walk-threshold}
with $M \geq 4$. Suppose that:
\begin{enumerate}
    \item $p_a, q_a \geq 0$ and $p_a + q_a \leq 1$ for $a \in \{0,1\}$;
    \item the rewards are increasing in the state;
    \item the $\delta_i$ is increasing;
    \item activation reveals the current state and $\bm{\Gamma}_i = P^1_{i,\cdot}$.
\end{enumerate}

If $m_\delta \geq \frac{\beta}{1-\beta} S_R$
where $m_\delta = \min_i (\delta_{i+1} - \delta_i)$, then the activation 
advantage $\Delta_{\lambda}(\boldsymbol{\omega})$ 
is nondecreasing with respect to the
MLR order on the belief space. Therefore, the optimal policy admits an 
increasing threshold structure.
\end{theorem}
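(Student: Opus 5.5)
The plan is to obtain the theorem as a direct specialization of Proposition~\ref{prop:m-state-explicit-condition}, using the values of $c$, $D_\Gamma$ and $L_R$ already computed for the birth--death family. Before doing so, I will check that the standing hypotheses of Theorem~\ref{lem:m-state-threshold-sufficiency} (Assumptions 1--4) hold here.

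\emph{Assumption checks.}
\begin{itemize}
\item Assumption~1 holds because the reward table is finite.
\item Assumption~2 is hypothesis 2 of the theorem.
\item Assumption~3 is hypothesis 3.
\item Assumption~4 (TP2) must be verified for the tridiagonal $P^a$. A tridiagonal stochastic matrix is TP2 as long as every adjacent $2\times 2$ minor is nonnegative. The binding minors are $(1-p_a-q_a)^2-p_aq_a\ge 0$ in the interior rows, and $(1-p_a)(1-p_a-q_a)-p_aq_a\ge0$ together with its analogue at the right boundary.
\end{itemize}
So TP2 is not automatic from $p_a+q_a\le1$ alone. I expect this to be the main obstacle. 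What I would try first is to note that the paper's argument only uses TP2 through Lemma~\ref{lemma:monotonicity} and the MLR ordering of the $\bm\Gamma_i$. I would then either add the mild condition $(1-p_a-q_a)^2\ge p_aq_a$ as implicitly assumed, or interpret hypothesis 1 as including it. Failing that, I would argue stochastic-order monotonicity directly, since birth--death chains are stochastically monotone whenever $p_a+q_a\le 1$.

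\emph{Specialization of constants.} With $M\ge4$, the preceding computation gives $c_0=c_1=1$, so $c=1$ and $1-\beta c=1-\beta>0$. Because activation reveals the state, $D_\Gamma=2$. Because the rewards increase in the state, $L_R=S_R/2$, and this matches $\frac12\max_a\operatorname{span}(r^a)$ since each span equals $r(M-1,a)-r(0,a)$.

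\emph{Applying the proposition.} Substituting these constants, the hypothesis of Proposition~\ref{prop:m-state-explicit-condition} becomes
\[
m_\delta\ \ge\ \frac{\beta (S_R/2)\cdot 2}{1-\beta}=\frac{\beta}{1-\beta}S_R ,
\]
which is exactly the assumed inequality. The proposition then gives that $\Delta_\lambda$ is nondecreasing in the MLR order.

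To be safe, I would also trace the proposition's mechanism to confirm it. For $\bm\omega\succeq_{MLR}\bm\omega'$, Abel summation gives
\[
\delta(\bm\omega)-\delta(\bm\omega')=\sum_{k\ge1}(\delta_k-\delta_{k-1})\bigl(\bar F_{\bm\omega}(k)-\bar F_{\bm\omega'}(k)\bigr),
\]
where $\bar F$ denotes the tail sums. Each tail difference is nonnegative under MLR, hence under first-order dominance. Therefore $\delta(\bm\omega)-\delta(\bm\omega')\ge m_\delta\,\max_k(\bar F_{\bm\omega}(k)-\bar F_{\bm\omega'}(k))$. This has to dominate the Lipschitz bound on the continuation difference, which involves $\beta L_V$ times $D_\Gamma$ or $c_0$ times $|\bm\omega-\bm\omega'|_1/2$ (Lemma~\ref{lem:m-state-lipschitz} with $L_V=S_R/(2(1-\beta))$). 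Relating $|\bm\omega-\bm\omega'|_1$ to the tail differences is handled inside the proposition, so here I simply invoke it.

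\emph{Conclusion.} Monotonicity of $\Delta_\lambda$ implies that the set $\{\bm\omega:\Delta_\lambda(\bm\omega)\ge0\}$ is an MLR upper set. Hence the optimal policy is of increasing threshold type, as in the remark following Theorem~\ref{lem:m-state-threshold-sufficiency}.
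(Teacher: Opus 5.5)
Your proposal is correct and follows the paper's proof: substituting $c=1$, $D_\Gamma=2$ and $L_R=S_R/2$ into Proposition~\ref{prop:m-state-explicit-condition} turns its hypothesis into exactly $m_\delta\ge\frac{\beta}{1-\beta}S_R$. You are also right that TP2 does not follow from $p_a+q_a\le1$, since only the adjacent minors $(1-p_a-q_a)^2\ge p_aq_a$ and their boundary analogues are binding; the paper's proof passes over this point silently. Your stochastic-monotonicity fallback does close it: in this perfectly observed collapsing model, TP2 is used downstream only to obtain $V^\lambda(\bm{\Gamma}_0)\le\cdots\le V^\lambda(\bm{\Gamma}_{M-1})$. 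That ordering already follows from monotonicity of $V^\lambda$ in first-order stochastic dominance, which you get by induction on value iteration using stochastically monotone $P^0$ and $P^1$.
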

\begin{proof}
    See Appendix \ref{appendix: Theorem 1}.
\end{proof}







Hence, under the global Dobrushin coefficient approach, the explicit 
sufficient threshold condition takes the form:
\[
m_\delta \geq
\begin{cases}
\dfrac{\beta(1-p_1-q_1)}{1-\beta[1-\min\{p_0+q_0, p_1+q_1\}]} S_R, & M=2,\\[1.5em]
\dfrac{\beta(1-\min\{p_1, q_1\})}{1-\beta(1-m_P)} S_R, & M=3,\\[1.5em]
\dfrac{\beta}{1-\beta} S_R, & M \geq 4.
\end{cases}
\]
Here, $m_P := \min\{p_0, q_0, p_1, q_1\}.$
The $M \geq 4$ condition is more conservative than the corresponding 
$M=2$ and $M=3$ conditions because the global Dobrushin coefficient 
equals one once the extreme states have disjoint one-step transition supports.
\remove{
\color{blue}

\begin{theorem}[Sufficient Conditions for an Increasing Threshold Policy]
\label{thm:simplified-matrix-condition}

Consider an $M$-state restless bandit with state space
$\mathcal{S}=\{0,1,\ldots,M-1\}$ and actions
$a\in\{0,1\}$. Suppose the following conditions hold:

\begin{enumerate}

    \item 
    \[
    \delta_i = r(i,1) - r(i,0), \quad \delta_i \leq \delta_j, \quad \forall 0 \leq i < j \leq M-1.
    \]

    \item Each reward vector is increasing:
    \[
    r(0,a) \leq r(1,a) \leq \cdots \leq r(M-1,a), \qquad a\in\{0,1\}.
    \]

    \item Each transition matrix $P^a$ is TP2, or at least has
    stochastically ordered rows:
    \[
    P_{0,\cdot}^a \preceq_{\mathrm{FOSD}} P_{1,\cdot}^a \preceq_{\mathrm{FOSD}} \cdots \preceq_{\mathrm{FOSD}} P_{M-1,\cdot}^a,
    \quad a\in\{0,1\}.
    \]

    \item The transition matrices satisfy
    \[
    c_a = \frac{1}{2} \max_{i,j\in\{0,\ldots,M-1\}} \sum_{k=0}^{M-1} \left|P_{ik}^a-P_{jk}^a\right| < 1,
    \quad a\in\{0,1\}.
    \]

    \item Activation reveals the state, so that
    \[
    \bm{\Gamma}_i = P_{i,\cdot}^1, \qquad i \in \{0,\ldots,M-1\}.
    \]

    \item The reward and transition parameters satisfy
    \[
    \boxed{
    g_R \geq \frac{\beta c_1 S_R}{1 - \beta \max\{c_0, c_1\}}
    }
    \]
    where
    \[
    g_R = \min_{i \in \{0,\ldots,M-2\}} (\delta_{i+1} - \delta_i), \quad \delta_i = r(i,1) - r(i,0),
    \]
    and
    \[
    S_R = \max_{a\in\{0,1\}} (r(M-1, a) - r(0, a)).
    \]

\end{enumerate}

Then the total activation advantage
\[
A^{\lambda}(\boldsymbol{\omega}) = Q^{\lambda}(\boldsymbol{\omega},1) - Q^{\lambda}(\boldsymbol{\omega},0)
\]
is increasing with respect to the belief order. Consequently, the active
region is an order-upper set, and the optimal policy possesses an increasing
threshold surface in the $(M-1)$-dimensional belief simplex.

\end{theorem}

\begin{proof}
We prove that $A^{\lambda}(\boldsymbol{\omega})$ is nondecreasing under the 
MLR order. For $\boldsymbol{\omega} \preceq_{\mathrm{MLR}} \boldsymbol{\omega}'$,
we show $A^{\lambda}(\boldsymbol{\omega}) \leq A^{\lambda}(\boldsymbol{\omega}')$.

First, recall that:
\[
A^{\lambda}(\boldsymbol{\omega}) = Q(\boldsymbol{\omega},1) - Q(\boldsymbol{\omega},0)
\]

Expanding:
\[
A^{\lambda}(\boldsymbol{\omega}) = \sum_i \omega^i \delta_i + \lambda + \beta \left[ \sum_i \omega^i V(\Gamma_i) - V(\tau(\boldsymbol{\omega})) \right]
\]

where $\delta_i = r(i,1) - r(i,0)$.

Now consider $\boldsymbol{\omega} \preceq_{\mathrm{MLR}} \boldsymbol{\omega}'$.

\textbf{Step 1: Immediate reward term.}
Since $h_i$ is nondecreasing in $i$ and $\boldsymbol{\omega}'$ dominates $\boldsymbol{\omega}$
in MLR (which implies FOSD), we have:
\[
\sum_i \omega'_i \delta_i - \sum_i \omega_i \delta_i \geq g_R \cdot \|\boldsymbol{\omega}' - \boldsymbol{\omega}\|_{TV}
\]

\textbf{Step 2: Active continuation term.}
By the Lipschitz property of $V^\lambda$:
\[
\left| \sum_i (\omega'_i - \omega_i) V(\Gamma_i) \right| \leq L \cdot c_1 \cdot S_R \cdot \|\boldsymbol{\omega}' - \boldsymbol{\omega}\|_{TV}
\]

\textbf{Step 3: Passive continuation term.}
Similarly:
\[
|V(\tau(\boldsymbol{\omega}')) - V(\tau(\boldsymbol{\omega}))| \leq L \cdot c_0 \cdot \|\boldsymbol{\omega}' - \boldsymbol{\omega}\|_{TV}
\]

\textbf{Step 4: Combine.}
Using the contraction property of the Bellman operator, the Lipschitz constant $L$
satisfies:
\[
L \leq \frac{1}{1 - \beta \max\{c_0, c_1\}}
\]

Therefore:
\[
A(\boldsymbol{\omega}') - A(\boldsymbol{\omega}) \geq \left[ g_R - \frac{\beta c_1 S_R}{1 - \beta \max\{c_0, c_1\}} \right] \cdot \|\boldsymbol{\omega}' - \boldsymbol{\omega}\|_{TV} \geq 0
\]

by Condition 6. Thus $A^{\lambda}(\omega)$ is nondecreasing, and the optimal policy is a 
forward threshold policy.
\end{proof}

\subsubsection{Common Transition Matrix}

Suppose both actions have the same transition matrix $P^0=P^1=P.$

Let $c_P
=
\frac12
\max_{i,j\in\{0,\ldots,M-1\}}
\sum_{k=0}^{M-1}
|P_{ik}-P_{jk}|.$

Then
$c_0=c_1=\bar c=c_P,$
and condition 6 in Theorem 1 reduces to
\begin{equation}
\label{eq:common-P-condition}
\boxed{
g_R
\geq
\frac{\beta c_P S_R}
{1-\beta c_P}.
}
\end{equation}

Furthermore, if 
$\delta_i=\delta_0+g\cdot i,
\quad \forall i \in \{0,\ldots,M-1\},$
then $g_R=g,$
and the condition \ref{eq:common-P-condition} becomes
\begin{equation}
\label{eq:common-P-affine-reward}
\boxed{
g
\geq
\frac{\beta c_P S_R}
{1-\beta c_P}.
}
\end{equation}

\subsubsection{Identical Rows}

Suppose $P_{i,\cdot}^a
=
P_{j,\cdot}^a$, $\forall 0 \leq i,j \leq M-1$
for $a \in \{0,1\}$. Then, from \eqref{eq:ca}, $c_0=c_1=0.$ Consequently, the
sufficient condition reduces to $g_R\geq0.$

Thus, in this special case, increasing reward differences alone are sufficient
for an increasing threshold policy.

\begin{remark}[Interpretation]
The simplified condition
\[
g_R
\geq
\frac{\beta c_1S_R}
{1-\beta\bar c}
\]
has a direct interpretation:
\begin{itemize}
    \item \(g_R\) measures the strength of increasing differences in the
    immediate rewards;
    \item \(S_R\) measures the maximum statewise reward variation;
    \item \(c_1\) measures the separation of the post-activation beliefs;
    \item \(\bar c\) measures persistence of belief differences under the
    transition dynamics;
    \item \(\beta\) measures the relative importance of future rewards.
\end{itemize}

Small values of \(c_1\), \(\bar c\), or \(\beta\) make the threshold condition
easier to satisfy.
\end{remark}

\subsection{Threshold Conditions for One-Step Random-Walk Transitions}
\label{sec:m-state-random-walk-threshold}

Consider an $M$-state
birth-death model. Let the state space be
$\mathcal{S}=\{0,1,\ldots,M-1\},$
and, for each action $a\in\{0,1\}$, let the one-step transition
matrix be
\begin{equation}
\label{eq:m-state-birth-death-transition}
P^a= \left[P_{i,j}^a\right]_{i,j=0}^{M-1}
\end{equation}
where,
\begin{equation}
P_{i,j}^a
=
\begin{cases}
p_a, & j=i+1,\\
q_a, & j=i-1,\\
1-p_a-q_a, & 1\leq i=j\leq M-2,\\
1-p_a, & i=j=0,\\
1-q_a, & i=j=M-1,\\
0, & \text{otherwise}.
\end{cases}
\end{equation} 
and $p_a,q_a\geq 0,
\quad
p_a+q_a\leq 1.$

Assume that the rewards are increasing in the state,
\begin{equation}
\label{eq:m-state-increasing-rewards}
r_0^a\leq r_1^a\leq\cdots\leq r_{M-1}^a,
\qquad a\in{0,1},
\end{equation}
and that the activation reward difference is also increasing:
\begin{equation}
\label{eq:m-state-increasing-activation-difference}
r_0^1-r_0^0
\leq
r_1^1-r_1^0
\leq\cdots\leq
r_{M-1}^1-r_{M-1}^0.
\end{equation}

Define the reward range
\begin{equation}
\label{eq:m-state-reward-range}
S_R
:=
\max_{a\in\{0,1\}}
\left(r_{M-1}^a-r_0^a\right).
\end{equation}

Since the rewards are increasing in the state, the range of the reward
vector under action $a$ is
\begin{equation*}
    \max_{i} r_i^a-\min_{i} r_i^a
=r_{M-1}^a-r_0^a.
\end{equation*}
Consequently, the $\ell_1$-Lipschitz constant of the expected
one-step reward is
\begin{equation}
\label{eq:m-state-LR}
L_R
=
\max_{a\in{0,1}}
\frac{r_{M-1}^a-r_0^a}{2}
=
\frac{S_R}{2}.
\end{equation}

\subsubsection{Dobrushin Coefficient for $M\geq4$}
For $M\geq4$, the first and last rows of $P^a$ are

\begin{equation*}
    P^a_{0,\cdot}
=
(1-p_a,p_a,0,\ldots,0)
,\\
P^a_{M-1,\cdot}
= (0,\ldots,0,q_a,1-q_a).
\end{equation*}
Their supports are disjoint. Hence
\begin{align}
\left|
P^a_{0,\cdot}-P^a_{M-1,\cdot}
\right|_1
&=
(1-p_a)+p_a+q_a+(1-q_a)
\nonumber\
&=2.
\end{align}
Since the $\ell_1$ distance between two probability distributions
is at most $2$, it follows that
\begin{equation}
\label{eq:m-state-ca}
c_a
=
\frac{1}{2}
\max_{\{i,j\}}
\left|
P^a_{i,\cdot}-P^a_{j,\cdot}
\right|_1
=1,
\qquad M\geq4.
\end{equation}
Therefore,
\begin{equation}
\label{eq:m-state-global-c}
c
=\max_{{a\in\{0,1\}}}c_a
=1,
\qquad M\geq4.
\end{equation}

Suppose further that activation reveals the current state, so that $\bm{\Gamma}_i=P^1_{i,\cdot}.$
Then
\begin{align}
D_\Gamma
=
\max_{i,j}
\left|
\bm{\Gamma}_i-\bm{\Gamma}_j
\right|
=
\max_{i,j}
\left|
P^1_{i,\cdot}-P^1_{j,\cdot}
\right|.
\end{align}
For $M\geq4$, the first and last rows have disjoint supports, and
therefore
\begin{equation}
\label{eq:m-state-Dgamma}
D_\Gamma=2,
\qquad M\geq4.
\end{equation}

\begin{theorem}[Explicit $M$-state threshold condition]
\label{thm:m-state-random-walk-threshold}
Consider the $M$-state birth-death model
\eqref{eq:m-state-birth-death-transition} with $M\geq4$.
Suppose that:
\begin{enumerate}
\item $(p_a,q_a\geq0)$ and $p_a+q_a\leq1$ for
$a\in\{0,1\}$
\item the rewards are increasing in the state:
\[
r_0^a\leq r_1^a\leq\cdots\leq r_{M-1}^a,
\qquad a\in\{0,1\};
\]

\item the activation reward difference is increasing:
\[
r_0^1-r_0^0
\leq
r_1^1-r_1^0
\leq\cdots\leq
r_{M-1}^1-r_{M-1}^0;
\]

\item activation reveals the current state and
\[
\bm{\Gamma}_i=P^1_{i,\cdot}.
\]
\end{enumerate}

Define $S_R
=
\max_{a\in\{0,1\}}
\left(r_{M-1}^a-r_0^a\right).$

If
\begin{equation}
\label{eq:explicit-m-state-condition}
g_R
\geq
\frac{\beta}{1-\beta}S_R,
\end{equation}
then the activation advantage
$\Delta(\bm{\omega})=Q(\bm{\omega},1)-Q(\bm{\omega},0)$
is increasing with respect to the belief ordering. Consequently, the
active set is an order-upper set and the optimal policy admits an
increasing threshold surface.
\end{theorem}

\begin{proof}
By the preceding general monotonicity theorem, it suffices to verify
the sufficient condition
\begin{equation}
\label{eq:general-threshold-condition-m-state}
g_R
\geq
\frac{\beta L_R D_\Gamma}{1-\beta c}.
\end{equation}

For $M\geq4$, the Dobrushin coefficient satisfies
$c=1$, and, since activation reveals the state,
$D_\Gamma=2.$
Moreover, by \eqref{eq:m-state-LR},
$L_R=\frac{S_R}{2}.$
Substituting these quantities into
\eqref{eq:general-threshold-condition-m-state} gives
\begin{align}
g_R
&\geq
\frac{
\beta
\left(\frac{S_R}{2}\right)
(2)
}{
1-\beta
}
\\
&=
\frac{\beta}{1-\beta}S_R,
\end{align}
which is precisely \eqref{eq:explicit-m-state-condition}.
The desired monotonicity of the activation advantage and the resulting
threshold structure then follow from the preceding general theorem.
\end{proof}

\subsubsection{Special Cases $1 \leq M\leq 3$}
The cases $1 \leq M\leq 3$ are special because the one-step transition
supports of the extreme states overlap for $1 \leq M\leq 3$.

For $M=1$, there is only one state and $P^a=[1],
\quad
c_a=0,
\quad
D_\Gamma=0,
\quad
S_R=0.$
Thus the problem is degenerate and there is no nontrivial threshold
surface.

For $M=2$, 
\begin{equation*}
    P^a=
\begin{bmatrix}
1-p_a & p_a\\
q_a & 1-q_a
\end{bmatrix}
\end{equation*}
and hence
\begin{align}
c_a
&=
\frac12
\left|
(1-p_a,p_a)-(q_a,1-q_a)
\right|_1
\nonumber\
&=
1-p_a-q_a.
\end{align}
Therefore
\begin{equation}
\label{eq:m2-global-c}
c
=
1-\min\{p_0+q_0,p_1+q_1\}.
\end{equation}
If activation reveals the state, then
\begin{equation}
\label{eq:m2-Dgamma}
D_\Gamma
=
2(1-p_1-q_1).
\end{equation}
Consequently, the general condition becomes
\begin{equation}
\label{eq:m2-threshold-condition}
\boxed{
g_R
\geq
\frac{
\beta(1-p_1-q_1)
}{
1-\beta
\left[1-\min\{p_0+q_0,p_1+q_1\}\right]
}
S_R.
}
\end{equation}

For $M=3$, the Dobrushin coefficient is
\begin{equation}
\label{eq:m3-global-c}
c
=
1-
\min\{p_0,q_0,p_1,q_1\},
\end{equation}
and
\begin{equation}
\label{eq:m3-Dgamma}
D_\Gamma
=
2\left(1-\min\{p_1,q_1\}\right).
\end{equation}
Thus, defining $m_P
:=
\min\{p_0,q_0,p_1,q_1\},$
the explicit condition reduces to
\begin{equation}
\label{eq:m3-threshold-condition}
\boxed{
g_R
\geq
\frac{
\beta
\left(1-\min{p_1,q_1}\right)
}{
1-\beta(1-m_P)
}
S_R.
}
\end{equation}

Hence, under the global Dobrushin coefficient approach, the
explicit sufficient threshold condition takes the form
\begin{equation*}
    g_R
\geq
\begin{cases}
\displaystyle
\frac{
\beta(1-p_1-q_1)
}{
1-\beta
\left[1-\min\{p_0+q_0,p_1+q_1\}\right]
}S_R,
& M=2,\\
\displaystyle
\frac{
\beta
\left(1-\min\{p_1,q_1\}\right)
}{
1-\beta(1-m_P)
}S_R,
& M=3,\\
\displaystyle
\frac{\beta}{1-\beta}S_R,
& M\geq4.
\end{cases}
\end{equation*}

The $M\geq4$ condition is more conservative than the corresponding
$M=2$ and $M=3$ conditions because the global Dobrushin coefficient
equals one once the extreme states have disjoint one-step transition
supports.
}
\color{black}
\remove{
\subsubsection{Common Random-Walk Transition Matrix}

Suppose that the transition dynamics are independent of the action,
[
P^0=P^1=P,
]
where (P) is the (M)-state birth--death transition matrix
\begin{equation}
\label{eq:common-m-state-rw}
P
=

\begin{bmatrix}
1-p & p & 0 & \cdots & 0\
q & 1-p-q & p & \ddots & \vdots\
0 & q & 1-p-q & \ddots & 0\
\vdots & \ddots & \ddots & \ddots & p\
0 & \cdots & 0 & q & 1-q
\end{bmatrix},
\end{equation}
where
[
p,q\geq0,
\qquad
p+q\leq1.
]

Since the transition matrix is common to both actions,
[
c_0=c_1=c.
]

For (M=2), the two rows of (P) give
[
c=1-p-q,
\qquad
D_\Gamma=2(1-p-q).
]
Hence, using (L_R=S_R/2), the general threshold condition becomes
\begin{equation}
\label{eq:common-rw-condition-m2}
\boxed{
g_R
\geq
\frac{
\beta(1-p-q)
}{
1-\beta(1-p-q)
}
S_R.
}
\end{equation}

For (M=3), define
[
m=\min{p,q}.
]
The Dobrushin coefficient and the belief-diameter term are
[
c=1-m,
\qquad
D_\Gamma=2(1-m),
]
and therefore
\begin{equation}
\label{eq:common-rw-condition-m3}
\boxed{
g_R
\geq
\frac{
\beta(1-m)
}{
1-\beta(1-m)
}
S_R.
}
\end{equation}

For (M\geq4), the first and last rows of (P) have disjoint
supports:
[
P_{0,\cdot}=(1-p,p,0,\ldots,0),
\qquad
P_{M-1,\cdot}=(0,\ldots,0,q,1-q).
]
Consequently,
[
\left|P_{0,\cdot}-P_{M-1,\cdot}\right|*1=2,
]
and hence
\begin{equation}
\label{eq:common-rw-c-m-ge4}
c=1.
\end{equation}
Moreover, since activation reveals the state,
[
\bm{\Gamma}*i=P*{i,\cdot},
]
so that
\begin{equation}
\label{eq:common-rw-Dgamma-m-ge4}
D*\Gamma=2.
\end{equation}
Therefore, the general threshold condition reduces to
\begin{equation}
\label{eq:common-rw-condition-m-ge4}
\boxed{
g_R
\geq
\frac{\beta}{1-\beta}S_R,
\qquad M\geq4.
}
\end{equation}

Thus, under the global Dobrushin-coefficient approach, the sufficient
threshold condition for the common random-walk model is
\begin{equation}
\label{eq:common-rw-condition-general-m}
\boxed{
g_R
\geq
\begin{cases}
\displaystyle
\frac{\beta(1-p-q)}
{1-\beta(1-p-q)}S_R,
& M=2,[3ex]
\displaystyle
\frac{\beta(1-\min{p,q})}
{1-\beta(1-\min{p,q})}S_R,
& M=3,[3ex]
\displaystyle
\frac{\beta}{1-\beta}S_R,
& M\geq4.
\end{cases}
}
\end{equation}

For (M=2) and (M=3), larger values of the upward and downward
transition probabilities increase the overlap between the one-step
transition distributions and consequently reduce the sufficient reward
gap. For (M\geq4), however, the global Dobrushin coefficient equals
one, and the resulting sufficient condition no longer depends on
(p) and (q).
\subsubsection{Symmetric Random Walk}

Suppose that the transition probabilities are symmetric under each
action, namely,
[
p_a=q_a=\rho_a,
\qquad
0\leq\rho_a\leq\frac12.
]
Then the (M)-state transition matrix is
\begin{equation}
\label{eq:m-state-symmetric-rw}
P^a
===

\begin{bmatrix}
1-\rho_a & \rho_a & 0 & \cdots & 0\
\rho_a & 1-2\rho_a & \rho_a & \ddots & \vdots\
0 & \rho_a & 1-2\rho_a & \ddots & 0\
\vdots & \ddots & \ddots & \ddots & \rho_a\
0 & \cdots & 0 & \rho_a & 1-\rho_a
\end{bmatrix}.
\end{equation}

For (M=2), the Dobrushin contraction coefficient is
[
c_a=1-2\rho_a.
]
For (M=3),
[
c_a=1-\rho_a.
]
For (M\geq4), the first and last rows of (P^a) have disjoint
supports, and therefore
[
\left|P^a_{0,\cdot}-P^a_{M-1,\cdot}\right|_1=2.
]
Hence
[
c_a=1,
\qquad M\geq4.
]

Define
[
\bar c=\max_{a\in{0,1}}c_a.
]
Thus,
\begin{equation}
\label{eq:symmetric-rw-global-c}
\bar c
======

\begin{cases}
\displaystyle
1-2\min{\rho_0,\rho_1},
& M=2,[2mm]
\displaystyle
1-\min{\rho_0,\rho_1},
& M=3,[2mm]
1,
& M\geq4.
\end{cases}
\end{equation}

Suppose that activation reveals the current state, so that
[
\bm{\Gamma}*i=P^1*{i,\cdot}.
]
For (M=2),
[
D_\Gamma=2(1-2\rho_1),
]
while for (M=3),
[
D_\Gamma=2(1-\rho_1).
]
For (M\geq4), the extreme rows have disjoint supports, yielding
[
D_\Gamma=2.
]

Using
[
L_R=\frac{S_R}{2},
]
the sufficient threshold condition
[
g_R
\geq
\frac{\beta L_R D_\Gamma}
{1-\beta\bar c}
]
becomes
\begin{equation}
\label{eq:symmetric-rw-condition-general-M}
\boxed{
g_R
\geq
\begin{cases}
\displaystyle
\frac{
\beta(1-2\rho_1)
}{
1-\beta\left(1-2\min{\rho_0,\rho_1}\right)
}S_R,
& M=2,[4ex]
\displaystyle
\frac{
\beta(1-\rho_1)
}{
1-\beta\left(1-\min{\rho_0,\rho_1}\right)
}S_R,
& M=3,[4ex]
\displaystyle
\frac{\beta}{1-\beta}S_R,
& M\geq4.
\end{cases}
}
\end{equation}

In particular, when both actions have the same symmetric random walk,
[
\rho_0=\rho_1=\rho,
]
the condition reduces to
\begin{equation}
\label{eq:same-symmetric-rw-condition-general-M}
\boxed{
g_R
\geq
\begin{cases}
\displaystyle
\frac{\beta(1-2\rho)}
{1-\beta(1-2\rho)}S_R,
& M=2,[3ex]
\displaystyle
\frac{\beta(1-\rho)}
{1-\beta(1-\rho)}S_R,
& M=3,[3ex]
\displaystyle
\frac{\beta}{1-\beta}S_R,
& M\geq4.
\end{cases}
}
\end{equation}

Thus, for (M=2) and (M=3), increasing the symmetric transition
probability reduces the sufficient reward gap. For (M\geq4), the
global Dobrushin coefficient is equal to one, and the resulting
sufficient condition is independent of (\rho_0) and (\rho_1).
}
\remove{
\subsection{Illustration of Threshold Policy}
In this section, we present a simple illustration of threshold nature of optimal policy for $N=5$. For visualization, we consider the two-dimensional simplex obtained
by fixing $\omega_0=\omega_1=0$. 
Therefore, $\omega_2+\omega_3+\omega_4=1, \omega_i\geq 0,$ $i=\{2,3,4\}$ which is a triangular simplex embedded in the
$(\omega_2,\omega_3,\omega_4)$ space as shown in \ref{fig:active_passive_regions}. The three vertices correspond to $(1,0,0), (0,1,0), (0,0,1).$
The triangle is colored according to the optimal action determined
from the sign of $\Delta_{\lambda}(\boldsymbol{\omega})$. In figure \ref{fig:active_passive_regions}, we observe that the two dimensional slice shows threshold type optimal policy. For the details of the numerical illustration, see.... 
}

\section{Numerical Illustrations }
\subsection{Optimal Policy: Two Dimensional Slice for $M=5$}
We numerically solve the discounted Bellman equation on the
five-state belief simplex $\Delta^4
=
\left\{
\boldsymbol{\omega}\in\mathbb{R}_+^5:
\sum_{i=0}^4 \omega_i=1
\right\}.$
The transition matrices under the passive and active actions are,
respectively,
\[
P^0
=
\begin{pmatrix}
0.40&0.30&0.15&0.10&0.05\\
0.20&0.35&0.25&0.15&0.05\\
0.10&0.20&0.40&0.20&0.10\\
0.05&0.15&0.25&0.35&0.20\\
0.05&0.10&0.15&0.30&0.40
\end{pmatrix},
\]
and
\[
P^1
=
\begin{pmatrix}
0.35&0.35&0.15&0.10&0.05\\
0.10&0.30&0.35&0.20&0.05\\
0.05&0.15&0.35&0.30&0.15\\
0.02&0.08&0.20&0.40&0.30\\
0.01&0.04&0.15&0.30&0.50
\end{pmatrix}.
\]
The corresponding state-dependent rewards are
$r^1=(0,1,2,3,4)$
and $r^0=(0,0.8,1.6,2.4,3.2).$
We use a discount factor $\beta=0.95$.

The belief simplex is discretized using a uniform grid with
resolution $25$, i.e., each belief component is an integer multiple
of $1/25$. For a given subsidy $\lambda$, value iteration is performed
according to
\[
V_{n+1}^{\lambda}(\boldsymbol{\omega})
=
\max\left\{
Q_n^{\lambda}(\boldsymbol{\omega},1),
Q_n^{\lambda}(\boldsymbol{\omega},0)
\right\},
\]
where $Q_n^{\lambda}(\boldsymbol{\omega},0)
=
\lambda+
\boldsymbol{\omega}r^{0}
+
\beta V_n(\boldsymbol{\omega}P^{0})$ and $Q_n^{\lambda}(\boldsymbol{\omega},1),
=
\boldsymbol{\omega}r^{1}
+
\sum_{i=0}^{4}
\omega^i
V_n^\lambda(\boldsymbol{\Gamma}_i)$, $\boldsymbol{\Gamma}_i
=
(P^1)^{\mathsf T}\mathbf{e}_i.$

Since the next belief at $a=0$ or
$1$ need not coincide with a grid point, we use
piecewise-linear interpolation over a Delaunay triangulation of the
discretized belief simplex. Since $\sum_{i=0}^{4}\omega_i=1$, only four belief coordinates are independent, and the triangulation is
constructed in the coordinates $(\omega_0, \omega_1, \omega_2, \omega_3)$.
The Delaunay triangulation and the corresponding barycentric
coordinates of all possible next-belief points are computed once and
reused throughout value iteration. Value iteration is terminated when $\|V_{n+1}-V_n\|_\infty < 10^{-6}.$ The subsidy is varied over $\lambda\in\{0.50,0.55,0.60,\ldots,2.05\}.$

For visualization, we consider the two-dimensional simplex obtained
by fixing $\omega_0=\omega_1=0.$ So $\sum_{i=2}^{4}\omega_i=1$, $\omega_i\geq 0,\quad i \in \{2,3,4\}$ which is a triangular simplex embedded in the
$(\omega_2,\omega_3,\omega_4)$ space. The three vertices correspond to $(1,0,0)$, $(0,1,0)$, and $(0,0,1)$. The triangle is colored according to the optimal action determined
from the sign of $\Delta_{\lambda}(\boldsymbol{\omega})$.
\begin{figure}[t]
    \centering
    \includegraphics[width=0.85\linewidth]{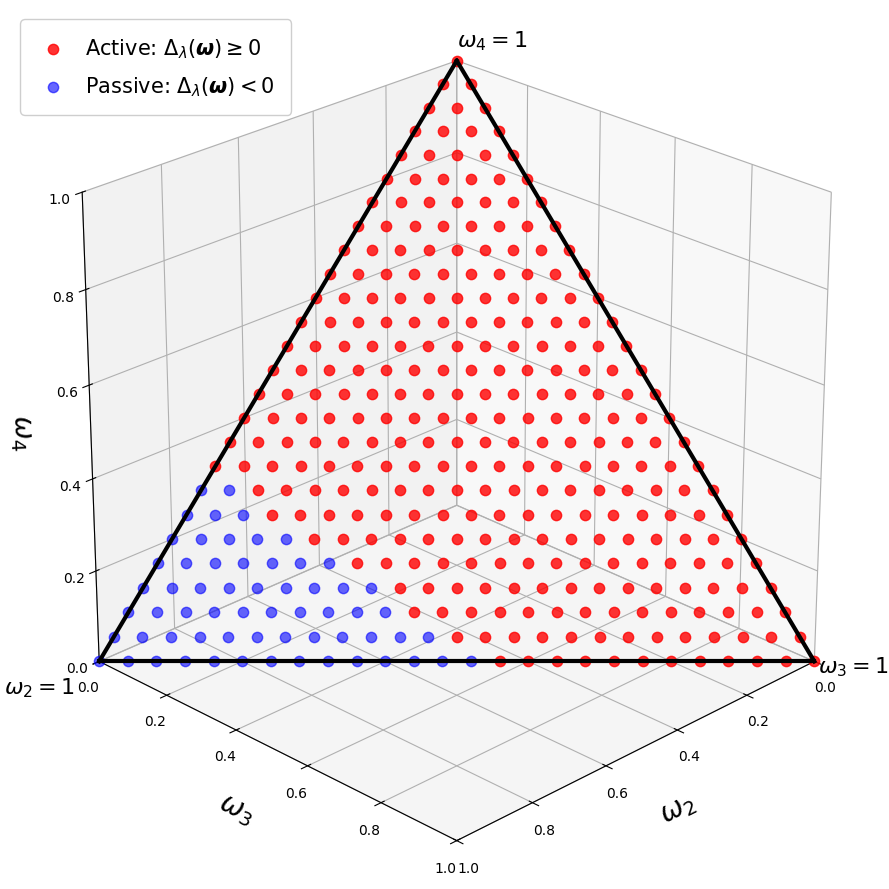}
    \caption{Active and passive regions on the two-dimensional belief
    simplex obtained by fixing $\omega_0=\omega_1=0$, so that
    $\omega_2+\omega_3+\omega_4=1$, for subsidy $\lambda=1.05$.
    Red points correspond to beliefs for which
    $\Delta_{\lambda}(\boldsymbol{\omega})\geq 0$ and the active action is optimal,
    whereas blue points correspond to
    $\Delta_{\lambda}(\boldsymbol{\omega})< 0$ and the passive action is optimal.}
    \label{fig:active_passive_regions}
\end{figure}

\section{Conclusion}
\label{sec:conclusion}
We studied a finite-state partially observable collapsing bandit with
state-revealing activation and established sufficient conditions for the
activation advantage to be increasing in the MLR order. This yields an
increasing threshold structure over the $(M-1)$-dimensional belief
simplex. We further specialized the conditions to one-step birth-death
models and obtained explicit bounds for different state-space sizes.

The threshold characterization provides a principled way to determine when a location should be activated based on its current belief, and can facilitate the design of scalable monitoring policies across multiple locations. The resulting threshold characterization also provides a structural foundation for Whittle-index-based policies. However, threshold structure for a fixed subsidy does not by itself establish indexability; subsidy monotonicity of the passive set remains an important direction for future work. A detailed application study, including application-specific modeling, parameter selection, and numerical evaluation for wildlife monitoring, is also left for future work.
\bibliographystyle{IEEEtran}
\bibliography{reference}

\appendix
\balance
\section*{Proof of Lemma 4}
\label{appendix: Lemma 4}
For any belief $\bm{\omega}$, the expected one-step reward under action $a$ is $r^a(\bm{\omega})
=
\sum_{i=0}^{M-1}\omega^i r(i,a).$
Since both $\bm{\omega}$ and $\bm{\omega}'$ are probability
distributions, $\sum_{i=0}^{M-1}(\omega^i-\omega^{\prime i})=0.$
Therefore, for any constant $c$, $r^a(\bm{\omega})-r^a(\bm{\omega}')
=
\sum_{i=0}^{M-1}
(\omega^i-\omega^{\prime i})(r(i,a)-c).$
Choosing $c=\frac{r_{\max}^a+r_{\min}^a}{2},$
where
$r_{\max}^a=\max_i r(i,a)$ and
$r_{\min}^a=\min_i r(i,a)$, gives $|r(i,a)-c|
\leq
\frac{1}{2}\operatorname{span}(r^a).$
Hence $\left|r^a(\bm{\omega})-r^a(\bm{\omega}')\right|
\leq
\frac{1}{2}\operatorname{span}(r^a)
\|\bm{\omega}-\bm{\omega}'\|_1
\leq
L_R\|\bm{\omega}-\bm{\omega}'\|_1.$

The transition operator contracts the $\ell_1$ distance by at most the
Dobrushin coefficient $c$. We initialize $V_0^\lambda(\bm{\omega})=0,$
and define the finite-horizon value iteration
\begin{equation*}
\label{eq:value-iteration-three-state}
V_{n+1}^\lambda(\bm{\omega})
=
\max_{a\in\{0,1\}}
Q_{n+1}^\lambda(\bm{\omega},a),
\end{equation*}
where
\begin{equation*}
\label{eq:q-value-three-state}
Q_{n+1}^\lambda(\bm{\omega},a)
=
R^\lambda(\bm{\omega},a)
+
\beta
\mathcal C_a
V_n^\lambda(\bm{\omega}).
\end{equation*}
Here, $R^\lambda(\bm{\omega},0)
=
R(\bm{\omega},0)+\lambda,$ and $R^\lambda(\bm{\omega},1)
=
R(\bm{\omega},1)$
and \(\mathcal C_a\) denotes the corresponding continuation operator.

Suppose that for $n$ th iteration  we have 
\begin{equation*}
\label{eq:induction-lipschitz}
|V_n^\lambda(\bm{\omega})
-
V_n^\lambda(\bm{\omega}')|
\le
L_n
\|
\bm{\omega}
-
\bm{\omega}'
\|_1.
\end{equation*}
Then, for every action \(a\),
\begin{align}
|Q_{n+1}^\lambda(\bm{\omega},a)
-
Q_{n+1}^\lambda(\bm{\omega}',a)|
& \le
|R^\lambda(\bm{\omega},a)
-
R^\lambda(\bm{\omega}',a)|\nonumber\\
&+
\beta
\left|
\mathcal C_aV_n^\lambda(\bm{\omega})
-
\mathcal C_aV_n^\lambda(\bm{\omega}')
\right|.
\label{eq:q-lipschitz-step}
\end{align}

The subsidy \(\lambda\) does not affect the Lipschitz modulus because it is
constant in the belief. We have $|R^\lambda(\bm{\omega},a)
-
R^\lambda(\bm{\omega}',a)|
\le
L_R
\|
\bm{\omega}
-
\bm{\omega}'
\|_1.$

Using the contraction property of the continuation operator, we get $\left|
\mathcal C_aV_n^\lambda(\bm{\omega})
-
\mathcal C_aV_n^\lambda(\bm{\omega}')
\right|
\le
c_aL_n
\|
\bm{\omega}
-
\bm{\omega}'
\|_1.$
Substituting in \ref{eq:q-lipschitz-step}, we obtain $|Q_{n+1}^\lambda(\bm{\omega},a)
-
Q_{n+1}^\lambda(\bm{\omega}',a)|
\le
\left(
L_R+\beta c_aL_n
\right)
\|
\bm{\omega}
-
\bm{\omega}'
\|_1.$

The \((n+1)\)-stage value function is $V_{n+1}^\lambda(\bm{\omega})
=
\max_{a\in\mathcal A}
Q_{n+1}^\lambda(\bm{\omega},a).$
We have 
\begin{align*}
\left|
V_{n+1}^\lambda(\bm{\omega})
-
V_{n+1}^\lambda(\bm{\omega}')
\right|
&=
\left|
\max_{a\in\mathcal A}
Q_{n+1}^\lambda(\bm{\omega},a)
-
\max_{a\in\mathcal A}
Q_{n+1}^\lambda(\bm{\omega}',a)
\right|\\
& \leq
\max_{a\in\mathcal A}
\left|
Q_{n+1}^\lambda(\bm{\omega},a)
-
Q_{n+1}^\lambda(\bm{\omega}',a)
\right|.
\label{eq:value-difference-max}
\end{align*}

Using the actionwise bound, we get 
\begin{align*}
\left|
V_{n+1}^\lambda(\bm{\omega})
-
V_{n+1}^\lambda(\bm{\omega}')
\right|
\leq
\max_{a\in\mathcal A}
\left\{
\left(
L_R+\beta c_aL_n
\right)
\left\|
\bm{\omega}-\bm{\omega}'
\right\|_1
\right\}.
\end{align*}
The term  $\left\|
\bm{\omega}-\bm{\omega}'
\right\|_1$
does not depend on the action, and hence 
\begin{align*}
\left|
V_{n+1}^\lambda(\bm{\omega})
-
V_{n+1}^\lambda(\bm{\omega}')
\right|
\leq
\left[
\max_{a\in\mathcal A}
\left(
L_R+\beta c_aL_n
\right)
\right]
\left\|
\bm{\omega}-\bm{\omega}'
\right\|_1.
\end{align*}

Define $c = \max_{a\in\mathcal A}c_a.$

Since \(L_R\), \(\beta\), and \(L_n\) do not depend on the action,
\begin{align*}
\max_{a\in\mathcal A}
\left(
L_R+\beta c_aL_n
\right)
&=
L_R
+
\beta L_n
\max_{a\in\mathcal A}c_a
\nonumber\\
&=
L_R+\beta cL_n.
\end{align*}

Consequently, 
\begin{align*}
    V_{n+1}^\lambda(\bm{\omega})
-V_{n+1}^\lambda(\bm{\omega}')
\leq
\left(
L_R+\beta cL_n
\right)
\left\|
\bm{\omega}-\bm{\omega}'
\right\|_1.
\end{align*}

It follows that a valid Lipschitz modulus for
\(V_{n+1}^\lambda\) is $L_R+\beta cL_n.$ If $L_{n+1}$ denotes the smallest Lipschitz modulus of
\(V_{n+1}^\lambda\), then $L_{n+1}
\leq
L_R+\beta cL_n.$

Since \(L_0=0\), repeated substitution gives
\begin{equation*}
\label{eq:Ln-geometric}
L_n
\le
L_R
\sum_{k=0}^{n-1}
(\beta c)^k.
\end{equation*}

Therefore,
\begin{equation*}
\label{eq:finite-horizon-L}
L_n
\le
L_R
\frac{1-(\beta c)^n}
{1-\beta c}.
\end{equation*} 
Since $0<\beta<1$ and $0\le c\le1,$
we have $\beta c<1.$ Letting \(n\to\infty\), we obtain $\frac{L_R}
{1-\beta c}.$
The discounted Bellman operator is a contraction, so
\(V_n^\lambda\) converges uniformly to \(V^\lambda\). Passing to the limit
gives
\[
L_V
\le
\frac{L_R}
{1-\beta c}.
\]
This proves Lemma \ref{lem:m-state-lipschitz}.
\section*{Proof of Theorem 1}
\label{appendix: lemma 5}

     Recall that
\begin{align*}
\Delta_{\lambda}(\bm{\omega})
-
\Delta_{\lambda}(\bm{\omega}') &=
\delta(\bm{\omega})
-
\delta(\bm{\omega}')-
\beta
\left[
V^\lambda(\tau(\bm{\omega}))
-
V^\lambda(\tau(\bm{\omega}'))
\right]\\
&+
\beta
\sum_{i=0}^{2}
(\omega_i-\omega_i')
V^\lambda(\Gamma_i).
\end{align*}

Suppose $\bm{\omega}
\succeq
\bm{\omega}',$ we want to show that $\Delta_{\lambda}(\bm{\omega})
-
\Delta_{\lambda}(\bm{\omega}') \geq 0.$ 

Define
\begin{align*}
D_{\lambda}^{0}(\bm{\omega},\bm{\omega}')
&:=
V^{\lambda}\bigl(\tau(\bm{\omega})\bigr)
-
V^{\lambda}\bigl(\tau(\bm{\omega}')\bigr),
\label{eq:D0-definition}
\\
D_{\lambda}^{1}(\bm{\omega},\bm{\omega}')
&:=
\sum_{i=0}^{2}
(\omega^i-\omega^{i,'})
V^{\lambda}(\Gamma_i).
\end{align*}
Then
\begin{eqnarray*}
\Delta_{\lambda}(\bm{\omega})
-
\Delta_{\lambda}(\bm{\omega}')
=
\delta(\bm{\omega})-\delta(\bm{\omega}')
+ 
\beta
\left[
D_{\lambda}^{1}(\bm{\omega},\bm{\omega}')
-
D_{\lambda}^{0}(\bm{\omega},\bm{\omega}')
\right].
\end{eqnarray*} 

Since $\mathbf{P}^1$ is TP2, the post-activation beliefs satisfy $\boldsymbol{\Gamma}_0
\preceq_{\mathrm{MLR}}
\boldsymbol{\Gamma}_1
\preceq_{\mathrm{MLR}}
\cdots
\preceq_{\mathrm{MLR}}
\boldsymbol{\Gamma}_{M-1}.$ Since $V^\lambda$ is nondecreasing with respect to the MLR order,
it follows that $V^\lambda(\boldsymbol{\Gamma}_0)
\leq
V^\lambda(\boldsymbol{\Gamma}_1)
\leq
\cdots
\leq
V^\lambda(\boldsymbol{\Gamma}_{M-1}).$ Hence, $D_{\lambda}^{1}(\boldsymbol{\omega},\boldsymbol{\omega}')
=
\sum_{i=0}^{M-1}
(\omega^i-\omega^{\prime i})
V^\lambda(\boldsymbol{\Gamma}_i)
\geq V^\lambda(\boldsymbol{\Gamma}_{M-1})\sum_{i=0}^{M-1}
(\omega^i-\omega^{\prime i})
=0.$

Since $D_{\lambda}^{1}(\bm{\omega},\bm{\omega}')\geq0$,
 $D_{\lambda}^{1}(\bm{\omega},\bm{\omega}')-D_{\lambda}^{0}(\bm{\omega},\bm{\omega}') \geq -D_{\lambda}^{0}(\bm{\omega},\bm{\omega}')$
. 

Using Lipschitz property of value function, we obtain the following result. 
\begin{align*}
\Delta_{\lambda}(\bm{\omega})
-
\Delta_{\lambda}(\bm{\omega}')
\ge
\delta(\bm{\omega})
-
\delta(\bm{\omega}')
-
\frac{\beta L_Vc_0}{2}
\|
\bm{\omega}
-
\bm{\omega}'
\|_1.
\end{align*} 
the right-hand side is nonnegative. Therefore,
\[
\Delta_{\lambda}(\bm{\omega})
\ge
\Delta_{\lambda}(\bm{\omega}'),
\]
which proves the monotonicity of the passive-action advantage.

\section*{Proof of Proposition 1}
\label{appendix: proposition 1}

For ordered beliefs, $\bm{\omega}\succeq\bm{\omega}',$
the increasing-difference property implies that the immediate reward
difference is increasing. In particular, its increase can be bounded below
in terms of the minimum adjacent increment $m_\delta$. Let $d_i=\omega^i-\omega'^i.$ Since $\bm\omega\succeq_{\rm MLR}\bm\omega'$, $\sum_{i=0}^k d_i\le 0,\qquad k=0,\ldots,M-2.$
Define $\alpha_k = \sum_{i=0}^k(\omega'^i-\omega^i) \ge0.$ Then, $\sum_{i=0}^{M-1}d_i\delta_i = \sum_{k=0}^{M-2} \alpha_k(\delta_{k+1}-\delta_k).$ Since, $\delta_{k+1}-\delta_k\ge m_\delta,$ $\delta(\bm\omega)-\delta(\bm\omega') \ge m_\delta\sum_{k=0}^{M-2}\alpha_k.$ Now, $\max_k\alpha_k = \frac12\|\bm\omega-\bm\omega'\|_1.$ Since, $\alpha_k\ge0$, $\sum_{k=0}^{M-2}\alpha_k
\ge
\max_k\alpha_k
=
\frac12\|\bm\omega-\bm\omega'\|_1.$

Therefore, $\delta(\bm{\omega})-\delta(\bm{\omega}')
\geq
\frac{m_\delta}{2}
\left\|
\bm{\omega}-\bm{\omega}'
\right\|_1.$

Using $L_V=\frac{L_R}{1-\beta c}$
and $m_\delta
\geq
\frac{
\beta L_RD_\Gamma
}{
1-\beta c
},$
the pairwise sufficient condition becomes
\[
\delta(\bm{\omega})-\delta(\bm{\omega}')
\geq
\frac{
\beta L_RD_\Gamma
}{
2(1-\beta c)
}
\left\|
\bm{\omega}-\bm{\omega}'
\right\|_1.
\]

 The conclusion follows from
Lemma~\ref{lem:m-state-threshold-sufficiency}.

\section*{Proof of Theorem 2}
\label{appendix: Theorem 1}
By Proposition~\ref{prop:m-state-explicit-condition}, it suffices to verify $m_\delta \geq \frac{\beta L_R D_\Gamma}{1-\beta c}.$

For $M\geq4$, the first and last rows of each transition matrix
$P^a$ have disjoint supports. Hence
$c_a
=
\frac{1}{2}\max_{i,j}
\left\|P^a_{i,\cdot}-P^a_{j,\cdot}\right\|_1
=1,$
and therefore $c=1$. Moreover, since
$\bm{\Gamma}_i=P^1_{i,\cdot}$, the same disjoint-support argument gives $D_\Gamma
=
\max_{i,j}
\left\|\bm{\Gamma}_i-\bm{\Gamma}_j\right\|_1
=2.$
Finally, by the definition of $L_R$, $L_R=\frac{S_R}{2}.$

Hence $\frac{\beta L_R D_\Gamma}{1-\beta c}
=
\frac{\beta(S_R/2)(2)}{1-\beta}
=
\frac{\beta}{1-\beta}S_R.$

Thus the condition
\[
m_\delta\geq\frac{\beta}{1-\beta}S_R
\]
is exactly the sufficient condition in
Proposition~\ref{prop:m-state-explicit-condition}. Therefore,
$\Delta_\lambda(\bm{\omega})$ is nondecreasing with respect to the
MLR order, and the optimal policy admits an increasing threshold
structure.

\end{document}